\newcommand{\tuple}[1]{{\langle #1 \rangle}}
\newtheorem{definition}{Definition}
\newtheorem{theorem}{Theorem}
\newtheorem{proposition}{Proposition}
\newcommand{\partset}[1]{\ensuremath{\textbf{2}^{#1}}}
\newcommand{\eqdef}[0]{\; \triangleq \; } 
\newcommand{\abbrev}[1]{#1, \relax}
\newcommand{\ie}[0]{\abbrev{\textit{i.e.}}}
\newcommand{\eg}[0]{\abbrev{\textit{e.g.}}}
\newcommand{\sem}[1]{\left \lsem {#1} \right \rsem}
\newcommand{\A}[1]{[{#1}]}
\newcommand{\E}[1]{\langle{#1}\rangle}
\newcommand{\hybridset}[0]{\textsf{F}_{\mathcal H}}
\newcommand{\restrict}[2]{{#1}|_{#2}}
\newcommand{\dom}[1]{\text{Dom} \; #1} 
\newcommand{\cause}[0]{\mathrel{\medcircle\kern-3.25pt \to}} 
\newcommand{\pcause}[0]{\mathrel{\odot\kern-3.25pt \to}} 
\newcommand{\remcause}[0]{\mathrel{\oplus\kern-3.25pt \to}} 
\newcommand{\allcause}[0]{\mathrel{\ostar\kern-3.25pt \to}} 
\newcommand{\reg}[0]{\longrightarrow}
\newcommand{\activate}[0]{\stackrel{+}{\reg}}
\newcommand{\inhibit}[0]{\stackrel{-}{\reg}}
\newcommand{\htrace}[1]{
\setcounter{ti}{\theti+\thetinterv}
\draw (\theti,\thehauteur+0.2) -- (\theti,-0.1) node[below] {\tiny $#1 \strut$}; }
\newcommand{\htracei}[2]{
\setcounter{ti}{\theti+\thetinterv}
\pgfmathparse{\theti+#1-1}\let\z\pgfmathresult
\foreach \x in {\theti,...,\z} {
	\draw (\x,\thehauteur+0.2) -- (\x,-0.1) node[below] {\tiny $#2 \strut$}; }
\setcounter{ti}{\theti+#1-1} }
\newcommand{\hist}[2]{
\pgfmathparse{\thetimeg+(#1-\thetimeg)/2}\let\z\pgfmathresult
\setcounter{timeg}{#1}
\draw [very thick](\thetimeg,\thehauteur+0.2) -- (\thetimeg,-0.1);
\draw [very thick](\thetimeg,-0.6) -- (\thetimeg,-1.3);
\node at (\z,-1){\small $#2 \strut$};
}
\newcommand{\hcmax}[2]{
\draw [dashed] (-0.1,#1) -- (11,#1);
\node at (-0.3,#1 ) {\textbf{#2}};
}
\newcounter{timeg}
\newcounter{hauteur}
\newcounter{longueur}
\newcounter{tinterv}
\newcounter{ti}
\newenvironment{hgraph}[3]{
\setcounter{longueur}{#1}
\setcounter{hauteur}{#2}
\setcounter{tinterv}{#3}
\setcounter{timeg}{0}
\setcounter{ti}{0}
\begin{tikzpicture}[xscale=0.6, yscale=0.7]
\draw[->,very thick] (0,0) -- (#1+0.3,0);
\node at (#1-0.5,0.5){\footnotesize \textit{time}};
\draw[->,very thick] (0,-1.3) -- (0,#2+0.3);
\node at (0.4,#2){\footnotesize \textit{q.}};
\node at (-0.27,-0.4) {\small T};
\draw [semithick](0,-0.7) -- (#1,-0.7);
\node at (-0.27,-1) {\small H};
\draw [semithick](0,-1.3) -- (#1,-1.3);
}%
{\end{tikzpicture}}
\title{ \textsc{gubs}, a Behavior-based Language for Open System \\
Dedicated to Synthetic Biology}
\author{Adrien Basso-Blandin
\institute{IBISC lab. }
 \institute{Evry University}
\email{abasso@ibisc.univ-evry.fr}
\and 
 Franck Delaplace
\institute{IBISC lab.}
 \institute{Evry University}
\email{franck.delaplace@ibisc.univ-evry.fr}}
\begin{document}
\maketitle

\begin{abstract}
In this article, we propose a domain specific language, GUBS (Genomic Unified Behavior Specification), dedicated to the behavioural specification of synthetic biological devices, viewed as discrete open dynamical systems. \textsc{gubs} is a rule-based declarative language. By contrast to a closed system, a program is always a partial description of the behaviour of the system. The semantics of the language accounts the existence of some hidden non-specified actions that possibly alter the behaviour of the programmed devices.
The compilation framework follows a scheme similar to automatic theorem proving, aiming at improving synthetic biological design safety.

\end{abstract}
\section{Introduction}

Synthetic biology is an emerging scientific field combining the investigative nature of biology with the constructive nature of engineering~\cite{Purnick2009} to design synthetic biological systems. The issue is to devise new functionality/behaviour that does not exist in nature.
Then, the field of synthetic biology is looking forward principles and tools to make the biological devices inter-operable and programmable~\cite{Lu2009}. Synthetic biology projects were first focusing on the design and the improvement of small genetic devices comparable to logical gates for electronic circuits~\cite{Regot2010,Clancy2010}. Recently, projects have attempted to develop large bio-systems integrating different devices with as a long-term goal, the design of \textit{de-novo} synthetic genome~\cite{Gibson2010}. In this endeavour, the computer-aided-design (CAD) environments play a central role by providing the required features to engineer systems: specification, analysis, and tuning~\cite{Bilitchenko2011,Pedersen2009a, Umesh2010, Czar2009}. Pioneer applications show the valuable potential of such environments in \textsc{igem} competition. 


Currently, the design specifies the structural assembly of \textsc{dna} sequences (biobrick) as in \textsc{genocad}~\cite{Cai2009}. Although this description is indispensable to provide a finalized specification of devices, the abstraction level seems inappropriate for tackling large bio-systems. The required size of programs for sequence description likely makes the task error-prone and un-come-at-able. In the same way as large softwares cannot be programmed in binary, large biological systems cannot be described as a\textsc{dna} sequence assembly. Then, scaling up the complexity of the synthetic biological systems needs to complete the structural description by an additional abstract programming layout based on a high-level programming language and harness the automatic conversion of the design specification into a DNA sequence, like compilers. High level programming language for synthetic biology is announced as a key milestone for the second wave of synthetic biology to overcome the complexity of large synthetic system design~\cite{Purnick2009}. Nonetheless, in this domain, language technology is still in its infancy and transforming this vision into a concrete reality remains a daunting challenge. 

Such high-level language should describe the devices in term of functionalities, offering the ability to program the behaviour directly instead of the structure supporting this behaviour.
Indeed, behaviour specification contributes to accurately document the device by adding its behavioural description, to assess its functionality automatically and formally, notably by generating test-benches from this specification, and to get a relative independence to technology because different biological structures can carry out the same functionality. In this framework, the components are selected and organized automatically or semi-automatically to generate a structural description of the device at compile phase whose behaviour complies with the specified function.
A such approach has been already achieved in hardware by using languages as \textsc{vhdl}~\cite{Ashenden2002} or \textsc{verilog}~\cite{Thomas1998} to overcome the growing complexity of electronic circuits. However, the major difference in synthetic biology relates to the openness of biological system. Thereby, the issue is to propose a behavioural language for open systems.
More precisely, \textsc{gubs} is a rule-based declarative language dedicated to the behavioural specification of \emph{discrete open dynamical systems }for synthetic biology interacting with its environment. \textsc{gubs} symbolically defines
the behaviours to provide a relative independence from structures by postponing the biological component selection at compile phase. Within this framework, the compiler translates the behavioural specification to a structural description of a device whose behaviour carries the functional features defined by a program. The proposed compilation method is inspired by automated theorem proving. 

After introducing the main features of \textsc{gubs} language (Section~\ref{sec:gubs}), we define the semantics of \textsc{gubs} based on hybrid logic. Then, we detail the proof-based principles governing the compilation (Section~\ref{sec:compil}) illustrated with a complete example (Section~\ref{sec:example}). After a survey of the related works, Section~\ref{sec:related}, we conclude (Section~\ref{sec:conclusion}).

\section{\textsc{gubs} language}
\label{sec:gubs}
\newcommand{\context}[1]{[#1]}

In this section, we describe the main features of \textsc{gubs}. 
\paragraph{Constant and variables.} In \textsc{gubs}, two kinds of objects are distinguished: the \emph{constants} and the \emph{variables}. The constants designate the pre-defined objects in a corpus of knowledge. In biology, the constants may refer to proteins or genes of interest. For example, the agent $\textit{LacZ}$ refers to LacZ protein or gene. By convention, their name starts with a capital letter. The variables refer to an abstraction of these pre-defined objects and can be potentially replaced (substituted) by any constant. By convention, the variable names start with a minuscule letter.
\paragraph{Agents, attributes and states.}
The \emph{agents} represent the biological objects. Their different observable \emph{states} characterize their different \emph{behaviours}. The behaviours actually define the different capacities for actions on the state of the other agents. They are characterized symbolically by a set of \emph{attributes} categorizing these different capacities. The real significance of the attributes is a matter of convention depending on the targeted realization (\eg protein pathways, gene network) and will be addressed through examples. 
For instance, the regulatory activity of a gene is observationally related to thresholds of \textsc{rna} transcripts concentration. At a given threshold, a gene regulates a given set of genes whereas at another one the regulation applies to another set of genes (See Figure~\ref{fig:causes}). The different thresholds define the levels of gene activities leading to different regulatory activities.
For a gene $G$, if we identify three different kinds of regulatory activities, the state of this gene will be defined by three different attributes $\{\textit{Low}, \textit{Mid}, \textit{High} \}$ that characterize symbolically three possible behaviours. 
For example, $G(\textit{Low})$ expresses the fact that agent $G$ is in state $\textit{Low}$ and then ready for the action corresponding to this attribute. In some cases, a single state is sufficient to qualify the capacity for the action of the agent. Hence, the agent is identified to its capacity. Then, $G$ means that agent $G$ is available. 

By contrast, $G(\overline{\textit{Low}})$ signifies that the state of the agent differs from $\mathit{Low}$ ($\overline G$ when an agent has a single capacity). It is worth to point out that, not being in a state defined by an attribute, does not necessarily means that the agent state is in another attribute. Indeed, for open systems the state of the agents could be of any sort that does not necessarily belong to the pre-defined attributes. 

 Two kinds of relations on attributes are defined: an order, $\prec$, meaning ``less capacity than'' and an inequality, $\napprox$, meaning ``different capacity than''. Then $\mathit{Low} \prec \mathit{Mid}$ implies that the capacity for the action of $\mathit{Mid}$ includes the capacity related to $ \mathit{Low}$. Usually, in gene regulatory model~\cite{Delaplace2010}, the set of genes regulated at a given level will also be regulated at a higher level. By contrast, in signalling pathways, the phosphorylation of a protein induces a conformational change of the structure leading to a specific signalling potentiality not occurring in the unphosphosrylated conformation. Assuming that $\textit{Phos}$ and $\textit{UnPhos}$ respectively represents the phosphorylated and the unphosphorylated conformations of protein $P$, we have $\textit{Phos} \napprox \textit{UnPhos}$. Then, $P(\textit{Phos})$ implies $P(\overline{\textit{UnPhos}})$ implicitly. The attributes and the relation between attributes will be declared as follows:
$
 G :: \{\textit{Low} \prec \textit{Mid}, \textit{Mid} \prec \textit{High} \},
 P :: \{ \textit{Phos} \napprox \textit{UnPhos} \}.$
A simple set of attributes replaces the relations if unknown and no specific relation is set between attributes.

Finally, the description of the agent state is extended to a collection of agent states as follows: $g_1 + \ldots + g_n$, meaning that all the agent states, $g_i$, are observed concomitantly.

\paragraph{Trace, event, and history.} 
A \textsc{gubs} program describes a behaviour, its interpretation is based on the observations of designed systems. Then, the issue is to formalize the notion of behaviour observation. To this end, we focus on the notion of \emph{trace} that symbolically represents the evolution of some quantities related to the agents of interest by the evolution of these agent states. A trace can be obtained from experiments by establishing a correspondence between measurements of some quantities (\eg \textsc{rna} transcript concentration) and attributes of agents. Formally, a trace, $(T_t)_{1 \leq t \leq m}$, is a finite sequence of agent state sets where each set contains the agent states at a given instant. For instance, the evolution of a concentration evolving from $\textit{Low}$ to $\textit{High}$ for $G$ may be described by the following trace of $6$ instants: 
$
\renewcommand{\arraystretch}{0.5}
\scriptsize
\setlength{\doublerulesep}{\arrayrulewidth}
 \begin{array}[t]{ c @{} c @{} c @{} c @{} c @{} c @{} c}
(\{ G(Low) \}, & \{G(Low) \}, &\{G(Mid)\}, &\{G(Mid)\}, & \{G(Mid)\},& \{G(High)\}),& \hspace{4ex}. \\
\scriptscriptstyle 1 & \scriptscriptstyle 2 & \scriptscriptstyle 3 & \scriptscriptstyle 4 & \scriptscriptstyle 5 & \scriptscriptstyle 6 & \scriptscriptstyle 7
\end{array}$
However, all the events in a trace are not necessarily relevant with regards to the behaviour description. For example, if we focus on the evolution from $\mathit{Low}$ to $\mathit{High}$ for $G$ , only three events are relevant for the behaviour description: $G(\mathit{Low}), G(\mathit{Mid}), G(\mathit{High})$; without accounting the intermediary evolution stages occurring between. Then, the behaviour recognition always emphasizes the key events in a trace entailing its contraction to show their succession. Such a contracted series is called a \emph{consistent history} of the expected behaviour. Generally speaking, an history is related to a \emph{chronological division} of a trace into periods where the events of a period represent all the agent states occurring at each instant.
Then, an history is a sequence of these event sets.
Given a trace $(T_t)_{0 \leq t \leq m}$, and a chronological division, $(d_i)_{1\leq i \leq n}$, corresponding to a sequence of the starting dates for each period, the history is a sequence of agent states occurring in each period, $(H_i)_{1 \leq i < n}$, such that each $H_i=\bigcup_{d_i \leq t < d_{i+1}} T_t$. Hence, a consistent history is purposely made to point the characteristic event steps of a behaviour description out.

 In the previous example, a chronological division\footnote{Step $7$ is inserted as an extra step to comply with the definition of the chronological division.} of the trace leading to an history consistent with the expected evolution from $\textit{Low}$ to $\textit{High}$ for $G$ is $(1,3,6,7)$ which corresponds to following discrete time-intervals $([1,2],[3,5],[6,6]$).
The resulting history is: $\scriptstyle (\{ G(Low) \},\{G(Mid)\},\{G(High)\})$. Notice that $(1,2,4,7)$ also fits. However, the chronological division $(1,3,7)$ leads to an inconsistent history because the level \textit{Mid} is not seen as an intermediary event in the history (See also Figure~\ref{fig:causes} depicting the trace and consistent history of the dependences). The formal definition of the consistency in the scope of the semantics will be given in Section~\ref{sec:semantics}.

\paragraph{Behavioral dependence and observation spot.}
\begin{figure}[ht]
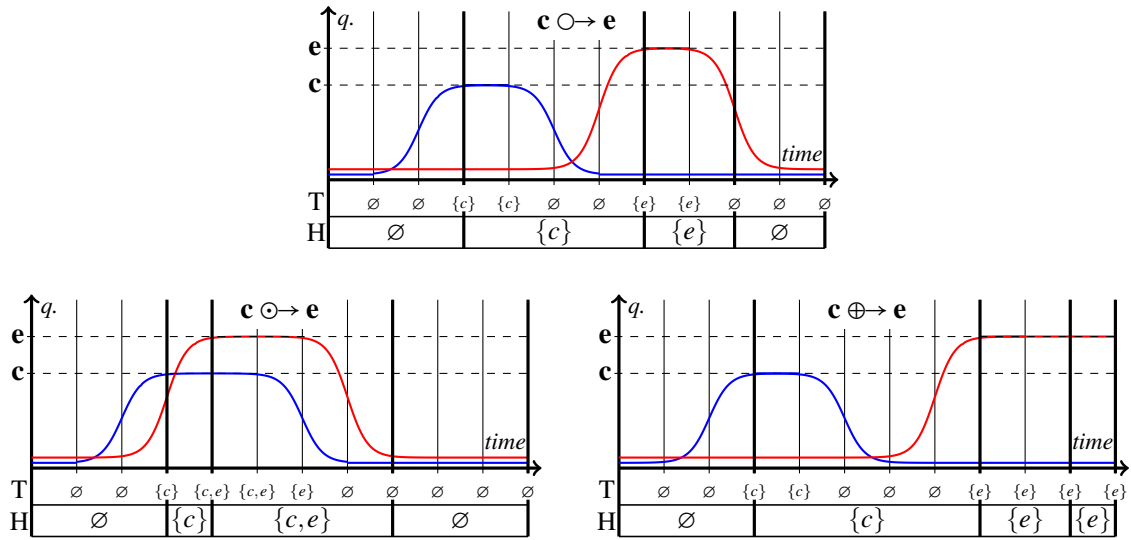

\begin{center}
\begin{hgraph}{11}{3}{1}
\htracei{2}{\emptyset}
\htracei{2}{\{c\}}
\htracei{2}{\emptyset}
\htracei{2}{\{e\}}
\htracei{3}{\emptyset}
\draw(5.5,3)node[fill=white] {$\textbf{c}\cause \textbf{e}$};
\begin{scope}
\clip (0,0) rectangle (11,3.2);
\draw [color=blue, thick](0,0.1) -- (1,0.1);
\draw[domain=-1:2,samples=100,color=blue,thick] plot ({\x+2},{1.7/(1+exp(-(\x*4.5)))+0.1});
\draw[domain=-2:1,samples=100,color=blue,thick] plot ({\x+5},{1.7/(1+exp(-(-(\x)*4.5)))+0.1});
\draw [color=blue,thick](6,0.1) -- (11,0.1);
\draw [color=red,thick](0,0.2) -- (4,0.2);
\draw[domain=-2:2,samples=100,color=red,thick] plot ({\x+6},{2.3/(1+exp(-(\x*4.5)))+0.2});
\draw[domain=-2:2,samples=100,color=red,thick] plot ({\x+9},{2.3/(1+exp(-(-(\x)*4.5)))+0.2});
\end{scope}
\hcmax{1.8}{c}
\hcmax{2.5}{e}
\hist{3}{\emptyset}
\hist{7}{\{c\}}
\hist{9}{\{e\}}
\hist{11}{\emptyset}
\end{hgraph}

\bigskip
\begin{hgraph}{11}{3}{1}
\htracei{2}{\emptyset}
\htrace{\{c\}}
\htracei{2}{\{c,e\}}
\htrace{\{e\}}
\htracei{5}{\emptyset}
\draw(5.5,3)node[fill=white] {$\textbf{c}\pcause \textbf{e}$};
\begin{scope}
\clip (0,0) rectangle (11,3.2);
\draw [color=blue,thick](0,0.1) -- (1,0.1);
\draw[domain=-1:2,samples=100,color=blue,thick] plot ({\x+2},{1.7/(1+exp(-(\x*4.5)))+0.1});
\draw[domain=-2:1,samples=100,color=blue,thick] plot ({\x+6},{1.7/(1+exp(-(-(\x)*4.5)))+0.1});
\draw [color=blue,thick](7,0.1) -- (11,0.1);
\draw [color=red,thick](0,0.2) -- (1,0.2);
\draw[domain=-2:2,samples=100,color=red,thick] plot ({\x+3},{2.3/(1+exp(-(\x*4.5)))+0.2});
\draw[domain=-2:2,samples=100,color=red,thick] plot ({\x+7},{2.3/(1+exp(-(-(\x)*4.5)))+0.2});
\draw [color=red,thick](9,0.2) -- (11,0.2);
\end{scope}
\hcmax{1.8}{c}
\hcmax{2.5}{e}
\hist{3}{\emptyset}
\hist{4}{\{c\}}
\hist{8}{\{c,e\}}
\hist{11}{\emptyset}
\end{hgraph}
\begin{hgraph}{11}{3}{1}
\htracei{2}{\emptyset}
\htracei{2}{\{c\}}
\htracei{3}{\emptyset}
\htracei{4}{\{e\}}
\draw(5.5,3)node[fill=white]{$\textbf{c}\remcause \textbf{e}$};
\begin{scope}
\clip (0,0) rectangle (11,3.2);
\draw[domain=-2:2,samples=100,color=blue, thick] plot ({\x+2},{1.7/(1+exp(-(\x*4.5)))+0.1});
\draw[domain=-2:2,samples=100,color=blue, thick] plot ({\x+5},{1.7/(1+exp(-(-(\x)*4.5)))+0.1});
\draw [color=blue,thick](7,0.1) -- (11,0.1);
\draw [color=red,thick](0,0.2) -- (5,0.2);
\draw[domain=0:4,samples=100,color=red,thick] plot ({\x+5},{2.3/(1+exp(-((\x-2)*4.5)))+0.2});
\draw [color=red,thick](8,2.5) -- (11,2.5);
\end{scope}
\hcmax{1.8}{c}
\hcmax{2.5}{e}
\hist{3}{\emptyset}
\hist{8}{\{c\}}
\hist{10}{\{e\}}
\hist{11}{\{e\}}
\end{hgraph}
\end{center}
\caption{The curves represent the typical behaviours of the causal dependences based on the time evolution of a quantity ($q$) related to agents $c$ and $e$ (\eg \textsc{rna} transcript for gene regulation). The symbolic agent states $c$ and $e$ are here both associated to the maximal threshold of the quantity. The symbolic trace ($T$) is issued from a periodic sampling of the evolution by identifying whether $c$ or $e$ occur. A consistent history ($H$) complying to a causal dependence definition is represented below the trace.
The first graphic illustrates the normal causality: $c\cause e$, the second the persistent: $c\pcause e$ and the third the remanent one: $c\remcause e$.}
\label{fig:causes}
\end{figure}
 A behavioural dependence identifies a relation between behaviours as a causal relation on events. Basically, the dependences should define the control of agents on another. However, the definition of the causality also needs to tackle the openness of a system by adapting it to this context. A seminal definition of the causality, proposed by Hume~\cite{Hume1739}, is formulated in terms of regularity on events: ``[we may define] a cause to be an object, followed by another, and where all the objects similar to the first are followed by objects similar to the second''. Although this definition appropriately characterizes the notion of control, the openness of the system implies to account the environment actions that possibly alter the causal dependence chain. For example, a programmed activation $G_1 \activate G_2$ may be contradicted by an existing inhibition $G_3 \inhibit G_2$ addressing the same target gene $G_2$. Hence, while $G_1$ is active, it may appear that $G_2$ will not be active because the regulatory strength of $G_3$ is greater than the regulatory strength of $G_1$, contradicting the expected activation by a hidden inhibition. Hence, pushed to the limit, this consideration prevents the ability to describe any behaviour causally because any programmed action can be unexpectedly preempted by an external one. 

However, by assuming that the design always describes a new functionality which is not observed naturally, the effect becomes the event indicating the effectiveness of a causal relation. As no cause external to the description can trigger the effect, the over-determination by unknown causes is prevented, then insuring that the program is the sole device entailing the expected effect in the biological system.
Hence, the definition of the causal dependence will be governed by the effect leading to the following definition of the dependence: \emph{``if effect $e$ would occur then $c$ occurs''}. Moreover, the scope of future (resp. past) is narrowed to a \emph{closest future (resp. past) period}, representing the fact that a response is always expected in a given delay. Notice that, the proposed definition circumvents the afore mentioned problem illustrated by the hidden inhibition because if the effect does not occur the question of the existence of a cause is meaningless. This definition is somehow equivalent to the causal claims proposed by Lewis~\cite{Lewis2000} in terms of counter-factual conditionals, \ie ``If $c$ had not occurred, $e$ would not have occurred''.
 
Three behavioural dependences are defined in \textsc{gubs}: the \emph{normal} denoted by $\cause$, \emph{persistent} by $\pcause$, and \emph{remanent} by $\remcause$. Informally, for normal dependence the cause precedes the effect providing the effect is observed; for persistent dependence the cause still precedes the effect but it is maintained while the effect is observed; and for remanent dependence, the effect is maintained despite the cause has disappeared. These dependences symbolize common biological interactions. For instance,
in genetic engineering, the recombination enables the emergence of a regulated gene or an hereditary trait permanently. A such mechanism typifies the remanent dependence in biology. The relations between gene expression at steady state are symbolized by persistent dependence. The behavioural dependences are defined as follows (see Section~\ref{sec:semantics} for their formalization): 
\begin{itemize}
\item $c \cause e$: if $e$ occurs then $c$ occurs in the closest past.
\item $c \pcause e$: if $e$ occurs then $c$ occurs in the closest past and also currently.
\item $c \remcause e$: if $e$ occurs then, either $e$ occurs in the closest past or the dependence complies to the property of the normal dependence.
\end{itemize}
Figure~\ref{fig:causes} exemplifies the correspondence between experimental traces, symbolic traces and the history for the causal dependences. 
 All the dependences are extended to a set of causes and a set of consequences, \ie $c_1 + \ldots + c_n \cause e_1+ \ldots + e_m$. 
For example, let us define the activation and the inhibition as follows:
$ g_1 \activate g_2 \equiv g_1 \pcause g_2, \overline g_1 \cause \overline g_2$ and $g_1 \inhibit g_2 \equiv\overline g_1 \pcause g_2, g_1 \cause \overline g_2 $, 
the program depicting a negative regulatory circuit with two genes, \ie $g_1 \activate g_2, g_2 \inhibit g_1$, is:
$ \{ g_1 \pcause g_2, \overline g_1 \cause \overline g_2,\overline g_2 \pcause g_1, g_2 \cause \overline g_1 \}.$

The \emph{observation spots} describe the set of observations expected in a trace. For instance, observing that gene $G$ is at level high is written $\textit{Obs}\textbf{::}G(\textit{High})$. As the activation of a dependence lies on the observation of the effect, the observation spot is used to determine which effects must be necessarily observed. For example, in the negative regulatory circuit, the characteristic observation spots are: $\textit{obs}_1\textbf{::} g_1 + \overline g_2, \textit{obs}_2\textbf{::}\overline g_1+ g_2$.

\paragraph{Compartment \& Context.} A compartment encloses a set of dependences making them local to the compartment. For instance, $ C \{ g_1 \cause g_2 \}$ describes a normal dependence occurring in compartment $C$. The compartments are hierarchically organized and all the compartments are included in another except for the outermost one.
Although the compartments directly refer to the compartmentalized cellular organization (\eg nucleus, mitochondria), they are also used to emphasize the isolation of some interactions by syntactically enclosing the dependences into a compartment. $C.s$ refers to an agent state in compartment $C$. 

A context refers to a stimulus acting on the system, as environmental conditions or external signalling. The application of a context $c$ to a set of dependences $b$ is written $[c]b$ where $c$ is either a variable or a constant. This means that dependences of $b$ are triggered when the context $c$ is present. For instance, recently Ye et al.~\cite{Ye2011} explore the opto-genetics signalling to control the expression of target transgenes. The blue-light induces the expression of transgene ($tg$) via a signalling cascade leading to the binding of \textsc{nfat} transcription factor to a specific promoter (\textsc{pnfat}). The following program using a context summarizes the process: $\context{\text{BlueLight}}\{ \text{NFAT}\pcause tg\}$. A context can be decomposed to several contexts, $[k_1,\ldots,k_n]b$, meaning that all the conditions must be met to trigger the dependences of $b$. The interpretation is equivalent to a context cascading, $[k_1][k_2]\ldots[k_n]b$. Moreover, the observation spots and the attribute definition are context insensitive.

\subsection{Semantics of \textsc{gubs}}
\label{sec:semantics}
The interpretation of \textsc{gubs} is a formula such that the set of all the 
models validating it defines all the possible histories complying to the programmed behaviour.
The interpretation is based on multi-modal hybrid logic with the ``Always'' operator, $\mathcal H (\textbf A, @)$.

\paragraph{Hybrid logic.} In what follows, we recall the formal syntax and semantics of hybrid logic.
 The hybdrid logic~\cite{Blackburn2006,Brauner2010} offers the possibility to denominate worlds by new symbols called \textit{nominals}. They will be used in satisfaction modal operators $@_a$; the formula $@_a\phi$ asserts that $\phi$ is satisfied at the unique point named by the nominal $a$ identifying a particular truth values of a formula at this point. 
Given a set of propositional symbol, $\textsf{PROP}$, a set of relational symbol $\textsf{REL}$, and a set of nominal $\textsf{NOM}$ disjoint to 
 $\textsf{PROP}$, a set of well formed formula in the signature of $\tuple{\textsf{PROP}, \textsf{NOM}, \textsf{REL}}$ is defined as follows:
$$ \phi ::= \top \;|\, p \;|\; a \;|\; \neg \phi \;|\; \phi \land \phi \;|\; @_a \phi \;|\; \E{k} \phi \;|\; \E{k}^- \phi \;|\;\textbf{A} \phi.$$ 
with $p \in \textsf{PROP}, a \in \textsf{NOM}$ and $k \in \textsf{REL}$.
 Moreover, the syntax is extended to other logical operators classically \footnote{ $\bot = \neg \top, \psi \lor \phi = \neg (\neg \psi \land \neg \phi), \psi \to \phi = \neg ( \psi \land \neg \phi), \A{k} \phi = \neg \E{k} \neg \phi, \textbf{E} \phi = \neg \textbf{A} \neg \phi$.}:
 $\bot, \lor, \to, \A{k}, \textbf{E}$.

The interpretation is carried out using the Kripke model satisfaction definition (Table~\ref{tab:hybrid}).
$\mathcal M,w \Vdash \phi$ is interpreted as the satisfaction of a formula $\phi$ by a model $\mathcal M$ at world $w$ where $\Vdash$ stands for the realizability relation (\ie ``is a model of''). A model \emph{validates} a formula, denoted by $ \mathcal M \Vdash \phi$, if and only if it is satisfied for all the worlds of the model (\ie $\forall w \in \dom \mathcal M: \mathcal M,w \Vdash \phi$).
\begin{definition}[Kripke model]
\label{def:kripke}
A Kripke model is a structure $\mathcal M = \tuple{W, (R_k)_{k \in \tau}, V} $ where $W = \dom{\mathcal M}$ is a non-empty set of \emph{worlds}, $\tau \subseteq \textsf{REL} $ a subset of relational symbols denoting the \emph{modalities}, $R_k \subseteq W \times W, k \in \tau$ a relation of accessibility, $V:(\textsf{PROP} \cup \textsf{NOM}) \to \textbf{2}^W$ an interpretation attributing to each nominal and propositional variable a set of worlds such that any nominal addresses one world at most (\ie $\forall a \in \textsf{NOM}: |V(a)| \leq 1$).

\medskip
\noindent
By convention, $R$ stands for the union of the accessibility relation, $R=(\bigcup_{k \in \tau} R_k)$.
\end{definition}
A \emph{modal theory} of a model $\mathcal M$ regarding to a set of formulas $F$, $\text{TH}_F (\mathcal M)$, is the set of formulas of $F$ validated by $\mathcal M$, \ie $\text{TH}_F (\mathcal M)=\{ \phi \in F \mid \mathcal M \Vdash \phi \}$. 
$\textsf{KS}(\phi)$ denotes the set of all models validating $\phi$, \ie $\textsf{KS}(\phi) = \{ \mathcal M \mid \mathcal M \Vdash \phi \}$.

\begin{table}[ht]
\small
$$
\begin{array}{ l @{\; {\mathcal M}, w \Vdash \;} l @{\; \text{iff} \; \; } l}
& \top & true \\
& a & w \in V(a), \; a \in \textsf{NOM} \cup \textsf{PROP} \\
& \neg \phi & {\mathcal M}, w \nVdash \phi\\
& \phi_1 \land \phi_2 & {\mathcal M}, w \Vdash \phi_1 \; \text{and} \; {\mathcal M}, w \Vdash \phi_2 \\
\end{array}
\vrule
\begin{array}{ l @{\; {\mathcal M}, w \Vdash \;} l @{\; \text{iff} \; \; } l}
& @_a \phi & \exists w' \in W: {\mathcal M}, w' \Vdash \phi\;\text{and}\; \{w'\} = V(a) \\
& \E{k}\phi & \exists w' \in W: {\mathcal M}, w' \Vdash \phi \; \text{and}\; w R_k w' \\
& \E{k}^- \phi & \exists w' \in W: {\mathcal M}, w' \Vdash \phi \; \text{and}\; w' R_k w \\
& \textbf {A} \phi & \forall w' \in W: {\mathcal M}, w' \Vdash \phi \\
\end{array}
$$
\label{tab:hybrid}
\caption{Hybrid logic interpretation.}
\end{table}
\paragraph{Semantics.}
A \textsc{gubs} program is interpreted by a hybrid logic formula where the modal operators characterize here the temporal observations on an history:
 $\A{~}$ means \emph{``observed in all the closest futures''} and $\E{~}$ means \emph{``observed in a possible closest future at least''} (resp. $\E{~}^-, \A{~}^-$ for the closest past). Moreover, we assume that the accessibility relations, $(R_k)_{k \in \tau}$, are indexed by the non empty parts of the set of all the contexts of a program $P$, denoted by $K_P$ (\ie $\tau = \partset{K_P} \setminus \{ \emptyset \}$). Then, a non-empty set of contexts ,$\emptyset \subset K \subseteq K_P$, is a modality, \ie $\E{K}, \A{K}$ with $\E{~} = \E{\emptyset}$ by convention.
Let $\tuple{\textsf{W}, \bullet, \Lambda}$ be the set of words $\textsf{W}$ with the concatenation operation and the neutral element, the empty word $\Lambda$ and $\hybridset$ the set of well-formed formulas of ${\mathcal H}(\textbf{A},@)$, the semantics is defined by four functions:
$\sem{. }: \textsf P \to \textsf{F}_{\mathcal H}, \sem{.}_P : \textsf P \to \textsf{W}\to \textbf{2}^{\textsf{W}} \to \textsf{F}_{\mathcal H},
\sem{.}_B : \textsf B \to \textsf{W} \to \textsf{F}_{\mathcal H}, \sem{.}_R : \textsf R \to \textsf{W} \to \textsf{F}_{\mathcal H}$, where $\textsf{P},\textsf{B},\textsf{R}$ respectively stand for the set of \textsc{gubs} programs, the set of agent state set and the set of relations on attributes. $\sem{. }$ initiates the interpretation. Table~\ref{tab:semantics} defines these functions. 
\begin{table}[ht]
\footnotesize
$$ 
\begin{array}{l @{\; = \; } l}
\sem{ \{ b \} } & \textbf{A}\left ( \sem{b}_P(\Lambda)(\emptyset) \right)\\
\multicolumn{2}{c}{ }\\ 
\sem{ \epsilon }_P(C)(K) & \top \\
\sem{ b_1,b_2 }_P(C)(K) & \sem{b_1}_P(C)(K) \land \sem{b_2}_P(C)(K) \\
\sem{s_1 \cause s_2 }_P(C)(K) & \sem{s_2}_B(C) \to \E{K}^-\left (\sem{s_1}_B(C) \right)\\
\sem{s_1 \pcause s_2 }_P(C)(K) & \sem{s_2}_B(C) \to \left(\sem{s_1}_B(C) \land \E{K}^-\left (\sem{s_1}_B(C)\right) \right) \\
\sem{s_1 \remcause s_2}_P(C)(K) & \sem{s_2}_B(C) \to \left((\E{~}^-\sem{s_2}_B(C)) \lor (\E{K}^- \sem{s_1}_B(C)) \right) \\
\sem{ g_1, \cdots, g_n : \;\{r_1,\cdots, r_m \}}_P(C)(K) & \bigwedge_{i=1}^n \bigwedge_{j=1}^m \sem{r_j}_R(C.g_i) \\
\sem{l \textbf{::} s }_P(C)(K) & @_{l} \sem{s}_B(C)\\
\sem{C' \{ b \} }_P(C)(K) & \sem{b}_P(C. C')(K)\\
\sem{\context{K} \{b\}}_P(C)(K') & \sem{b}_P(C)(K \cup K')\\
\multicolumn{2}{c}{ }\\ 
\sem{s_1 + \ldots + s_n}_B(C) & \bigwedge_{i=1}^n \sem{s_i}_B(C)\\
\sem{C'.s}_B(C) & \sem{s}_B(C.C')\\
\sem{g(a)}_B(C)& C.g_a\\
\sem{g(\overline a)}_B(C) & \neg C.g_a\\
\sem{g}_B(C)& C.g\\
\sem{\overline g}_B(C)& \neg C.g\\
\multicolumn{2}{c}{ }\\ 
\sem{a_1 \prec a_2}_R(g) & g_{a_2} \to g_{a_1} \\ 
\sem{a_1 \napprox a_2}_R(g) & g_{a_1} \to \neg g_{a_2} \land g_{a_2} \to \neg g_{a_1}\\
\sem{a}_R(g) & \top
\end{array} 
$$
\label{tab:semantics}
\caption{Semantics of \textsc{gubs}. In the definition, $a$ represents an attribute, $b$ a behaviour, $g$ an agent, $s$ a set of agent states or an agent state, $r$ a relation on attributes, $C$ a compartment, $K$ a set of contexts and $b$ a set of behaviours (\ie contexts, compartments, dependences, attributes, observation spots).
}
\end{table}
For instance, the program of the negative regulatory network,
$\{ 
g_1 \pcause g_2, 
\overline g_1 \cause \overline g_2, 
g_2 \pcause \overline g_1, 
\overline g_2 \cause g_1, 
obs_1:: g_1+\overline g_2, obs_2::\overline g_1+g_2 \}$, is translated into the following formula:
\begin{equation*}
\small
\textbf{A}(
\begin{array}[t]{l}
g_2 \to ((\E{~}^- g_1) \land g_1) \land 
\neg g_2 \to (\E{~}^- \neg g_1) \land 
g_1 \to ((\E{~}^- \neg g_2) \land \neg g_2) \land 
\neg g_1 \to (\E{~}^- g_2) \land 
\\
@_{obs_1} (g_1 \land \neg g_2) \land @_{obs_2}(\neg g_1 \land g_2) 
\end{array}
\end{equation*}

\paragraph{Consistent history.} Now, we formally define the consistency of the history with regards to models. 
An history is assimilated to a path in a model ending by a world labelled with an observation spot label. 
The set of Kripke-models validating the interpretation of a program $P$, $\textsf{KS}(\sem{P})$, not only contains all the consistent histories, but also the possible histories corresponding to behavioural alterations due to external perturbations. Thus, the compilation generates a device such that all the models validating its interpretation integrate all the observations related to the program, including the consistent and the inconsistent ones.

More precisely, the consistency lies on the identification of the largest number of ``relevant'' events characterizing a complete causal chain described in a program. As an history is also a model, a consistent history should validate the interpretation of the complete causal chain.
 The dependence formula set $F_P$ of a program $P$ corresponds to a set of formulas where each formula is the interpretation of a dependence taken separately with the attributes related to the involved agents.
 By definition of the semantics, any model validating the interpretation of a program also validates each formula of this set. The consistency of an history is then based on the validated formulas of this set by this history. \emph{An history $\mathcal{M}_H$ is consistent for $P$ if and only if no other modal theory of histories based on $F_P$ (\ie $\text{TH}_{F_P} (\mathcal M)$ with $\mathcal M$ as an history), ending with the same labelled world includes the modal theory of this history (\ie $\text{TH}_{F_P} (\mathcal M_H) \nsubseteq \text{TH}_{F_P} (\mathcal M)$).}

\section{Compilation}
\label{sec:compil}
At compile phase, a program is transformed to a structure (\eg a \textsc{dna} sequence) while inserted in a vector cell, should behave according to the programmed specification. The structure will result to an assembly of several devices stored in a library of components (\eg parts registry). As the design relates here to a behavioural/functional description, we need to bridge the gap between structural and functional description. This stage is called the \emph{functional synthesis}. The issue is to select a set of components whose assembly preserves the behaviour of the program. To achieve this goal, a \textsc{gubs} program is associated to each component to describe its behaviour. Thereby, the component assembly corresponds to a program assembly preserving the behaviour of the compiled program.
Preserving a behaviour is laid on a property called the \emph{behavioural inclusion} formalizing the fact that the characteristic observational traits of the specified function must be recognized in traces related to the device experiments. In other words, we can exhibit histories consistent with the programmed behaviour from histories consistent with the device behaviour description. 
The behavioural inclusion is defined from the interpretation of the programs, as a logical consequence (Definition~\ref{def:behinc}). 
\begin{definition}[Behavioral inclusion]
\label{def:behinc}
A program $Q$ \emph{behaviourally includes} another program $P$, if and only if the interpretation of the latter is a logical consequence of the interpretation of the former:
$$P \Sqsubset Q \eqdef \forall \mathcal M: \mathcal M \Vdash \sem{Q} \implies \mathcal M \Vdash \sem{P}.$$
\end{definition}
The behavioural inclusion is a pre-order\footnote{A reflexive and transitive relation.} such that the empty program, denoted by $\epsilon$, is a minimum, meaning that a program with no behaviour can be observed in all traces. And a program whose interpretation equals $\bot$, is a maximum. 
 Figure~\ref{fig:subModel} illustrates the behavioural inclusion on a particular model.
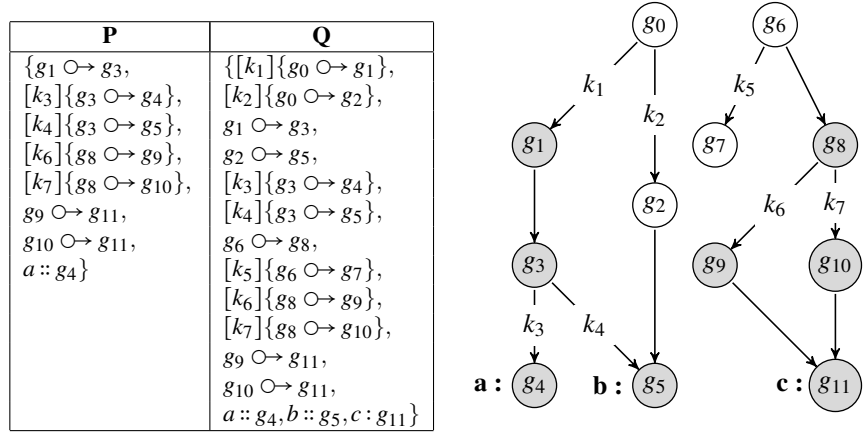
\begin{figure}[t]
	\centering
	\footnotesize 
\begin{tabular}{l r}
	$
	\begin{array}[t]{|l|l|}
	\hline
	\multicolumn{1}{|c|}{\textbf{P}} & \multicolumn{1}{|c|}{\textbf{Q}}\\
	\hline
	\{g_1 \cause g_3,\; & \{\context{k_1}\{g_0 \cause g_1\},\;\\
	\context{k_3} \{g_3 \cause g_4\},\;& \context{k_2}\{g_0 \cause g_2\},\;\\
 \context{k_4} \{g_3 \cause g_5\},\;& g_1 \cause g_3,\;\\
 \context{k_6}\{g_8 \cause g_9\},\;& g_2 \cause g_5,\;\\
 \context{k_7}\{g_8 \cause g_{10}\},\;& \context{k_3}\{g_3 \cause g_4\},\;\\
 g_9 \cause g_{11},\;& \context{k_4}\{g_3 \cause g_5\},\;\\
 g_{10} \cause g_{11},\;& g_{6} \cause g_{8},\; \\
 a:: g_4 \}& \context{k_5} \{g_6 \cause g_7\},\; \\
 & \context{k_6}\{g_8 \cause g_9\},\;\\
 & \context{k_7}\{g_8 \cause g_{10}\},\;\\
 & g_9 \cause g_{11},\;\\
 & g_{10} \cause g_{11},\; \\ 
 & a:: g_4, b:: g_5, c : g_{11} \}\\
\hline
	\end{array}$
	& 
\begin{tikzpicture}[baseline={([yshift={-\ht\strutbox}]current bounding box.north)},scale=0.8]
\small
\GraphInit[vstyle=Normal]
\SetVertexNormal[Shape=circle,LineWidth=0.5pt,LineColor=black]


\Vertex[x=1,y=2,LabelOut=true,Lpos=180,Ldist=1,L={\textbf{a :}},style={minimum size=15}]{P}
\Vertex[x=3,y=2,LabelOut=true,Lpos=180,Ldist=1,L={\textbf{b :}},style={minimum size=15}]{Q}
\Vertex[x=3,y=8,L={$g_0$},style={minimum size=15}]{P0}
\Vertex[x=1,y=6,L={$g_1$},style={fill=black!15,minimum size=15}]{P1}
\Vertex[x=3,y=5,L={$g_2$},style={minimum size=15}]{P2}
\Vertex[x=1,y=4,L={$g_3$},style={fill=black!15,minimum size=15}]{P3}
\Vertex[x=1,y=2,L={$g_4$},style={fill=black!15,minimum size=15}]{P4}
\Vertex[x=3,y=2,L={$g_5$},,style={fill=black!15,minimum size=15}]{P5}


\Vertex[x=5,y=8,L={$g_6$},style={minimum size=15}]{Q0}
\Vertex[x=4,y=6,L={$g_7$},style={minimum size=15}]{Q1}
\Vertex[x=6,y=2,LabelOut=true,Lpos=180,Ldist=1,L={\textbf{c :}},style={minimum size=15}]{R}
\Vertex[x=6,y=6,L={$g_8$},style={fill=black!15,minimum size=15}]{Q2}
\Vertex[x=4,y=4,L={$g_9$},style={fill=black!15,minimum size=15}]{Q3}
\Vertex[x=6,y=4,L={$g_{10}$},style={fill=black!15,minimum size=15}]{Q4}
\Vertex[x=6,y=2,L={$g_{11}$},style={fill=black!15,minimum size=15}]{Q5}
\tikzstyle{EdgeStyle}=[post]
\Edge[label={$k_1$}](P0)(P1)
\Edge[label={$k_2$}](P0)(P2)
\Edge[label={}](P1)(P3)
\Edge[label={}](P2)(P5)
\Edge[label={$k_3$}](P3)(P4)
\Edge[label={$k_4$}](P3)(P5)
\Edge[label={$k_5$}](Q0)(Q1)
\Edge[label={}](Q0)(Q2)
\Edge[label={$k_6$}](Q2)(Q3)
\Edge[label={$k_7$}](Q2)(Q4)
\Edge[label={}](Q3)(Q5)
\Edge[label={}](Q4)(Q5)
\end{tikzpicture}
\end{tabular}
\caption{Behavioral inclusion example. Consistent histories of $P$ necessary contains worlds coloured in gray. }
	\label{fig:subModel}
\end{figure}
\paragraph{Observability.}
\newcommand{\obs}[0]{\textbf{obs}\,}
It may arise that no history will be consistent with a programmed behaviour. For example, the program $\{ \textit{Obs}::g, \overline g \pcause g \}$ is not observable in a trace. Indeed, its interpretation yields to the following formula: $\textbf{A}( (@_{Obs} g) \land (g \to ((\E{~}^- \neg g) \land \neg g)))$, false in all models because world $\textit{Obs}$ must both satisfies $g$ and $\neg g$ by definition of the persistent dependence. A \textsc{gubs} program is said \emph{observable} if and only if the formula resulting from its interpretation is validated by one model at least. Hence, the interpretation of an unobservable program is an antilogy. An unobservable program can be assimilated to a programming error. The detection of such errors can be carried out at compile-phase by using tableaux method~\cite{Cerrito2011}
that automatically determines whether a formula is satisfiable in a model. Indeed \textsc{gubs} uses fragment of \textit{HL(@)} logic which is decidable. Notice that an observable program always behaviourally includes an observable program (Proposition~\ref{prop:comp1}).
\begin{proposition} A program behaviourally included in an observable program is observable: 
\label{prop:comp1}
$\forall P,Q \in \textsf{P}: \obs Q \land P \Sqsubset Q \implies \obs P.$
\end{proposition}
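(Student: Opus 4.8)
The plan is to simply unfold the two definitions involved and chain them; no real machinery is needed. First I would recall that, by definition, $\obs Q$ asserts that the formula $\sem{Q}$ is satisfiable, i.e.\ that there is a Kripke model $\mathcal M$ with $\mathcal M \Vdash \sem{Q}$, equivalently $\textsf{KS}(\sem{Q}) \neq \emptyset$. Fix such a witness model $\mathcal M$.

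Next I would invoke the hypothesis $P \Sqsubset Q$. By Definition~\ref{def:behinc} this says that every model validating $\sem{Q}$ also validates $\sem{P}$, i.e.\ $\textsf{KS}(\sem{Q}) \subseteq \textsf{KS}(\sem{P})$. Applying this to the fixed witness $\mathcal M$ gives $\mathcal M \Vdash \sem{P}$, so $\mathcal M \in \textsf{KS}(\sem{P})$ and hence $\textsf{KS}(\sem{P}) \neq \emptyset$, which is exactly $\obs P$. Generalising over $P,Q \in \textsf{P}$ closes the argument.

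There is essentially no obstacle here: the statement is an immediate consequence of the fact that behavioural inclusion is defined as (universally quantified) logical consequence, and observability is non-emptiness of the associated model class — non-emptiness is trivially inherited when passing to a superset of models. The only point that warrants a word of care is the direction of the inclusion: $P \Sqsubset Q$ makes $Q$'s models a \emph{subset} of $P$'s models (the behaviourally ``smaller'' program $P$ admits at least as many histories), which is precisely why observability propagates downward along $\Sqsubset$, and not the other way around. One may also note in passing that this is the semantic counterpart of the remark that $\epsilon$ is a minimum and a program interpreted by $\bot$ a maximum for $\Sqsubset$: the former is observable and everything below an observable program stays observable, while the latter is unobservable.
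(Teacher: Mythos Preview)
Your argument is correct and is essentially the same as the paper's: both simply unfold the definitions of observability and behavioural inclusion and use that any model of $\sem{Q}$ is a model of $\sem{P}$. The only cosmetic difference is that the paper phrases this as a proof by contradiction (assume $P$ unobservable, note $Q$ has a model, derive a contradiction with Definition~\ref{def:behinc}), whereas you give the equivalent direct argument.
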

\subsection{Functional synthesis}
The functional synthesis is the operation whereby biological components of a library are selected and assembled to generate a device behaviourally including the designed function. The behaviour of each component is described by a \textsc{gubs} program. At its simplest, the functional synthesis could be considered as a proper substitution of variables by constants. For example, in the following activation $\{G_1 \activate g_2\}$, $g_2$ will be substituted by gene $G_2$, providing that component $Q$ describes the activation $\{ G_1 \activate G_2 \}$. However, more complex situations may arise during component selection. For example, if the activation $G_1 \activate G_2$ occurs with another regulation only \ie $Q=\{G_1 \activate G_2, G_3 \activate G_4 \}$ then the selection of $Q$ adds a supplementary regulation.

Formally, a finite substitution is a set of mappings, $\sigma=\{ v_i / b_i \}_i$, on variables and constants such that a variable can be substituted by a variable or a constant, and a constant can only substituted by itself\footnote{ $P\sigma$ or $P[\sigma]$ represents its application on program $P$ and identity substitutions are omitted.}. For instance, we have:
$\{\textit{Obs}\textbf{::} G(l)+b_2, b_1 \cause G(l) \}[\{b_1 \mapsto B_1,b_2 \mapsto B_2, l \mapsto \textit{Low}\}]= 
\{\textit{Obs}\textbf{::} G(\textit{Low})+B_2, B_1 \cause G(\textit{Low}) \}$.

\paragraph{Functional synthesis rules.} The functional synthesis is defined by rules (Table~\ref{tab:rules}) governing the component assembly. Only the dependences and the attributes will be functionally synthesize. 
The observation spots are considered as annotations used for the compilation process. 
To insure the correctness, each transform must preserved the seminal behaviour. Hence, 
each program resulting from the application of a rule must behaviourally includes the previous one. 
Formally, the functional synthesis is modelled by a relation on programs denoted by $\leftfree$, \ie $Q \leftfree_\sigma P$ where $P$ is the initial program and $Q$ the transformed one, such that each rule insures that:
\emph{$Q \leftfree_\sigma P$ is correct with regards to a substitution $\sigma$, that is 
$P[\sigma] \Sqsubset Q[\sigma]$ and $Q[\sigma]$ is observable.}
Also notice that the behavioural inclusion is preserved by substitution (Proposition~\ref{prop:subst-behinc}). 

 \begin{proposition} For all substitutions $\sigma$, we have: $P \Sqsubset Q \implies P[\sigma] \Sqsubset Q[\sigma]$. 
\label{prop:subst-behinc}
\end{proposition}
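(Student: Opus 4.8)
The plan is to reduce the claim to a standard fact about hybrid logic --- that logical entailment is preserved under a uniform renaming of propositional symbols, nominals and modality indices --- together with the observation that passing from $\sem{\cdot}$ to $\sem{\cdot[\sigma]}$ is precisely such a renaming. First I would lift $\sigma$ to the hybrid signature. A substitution acts only on variables and constants, and in the formulas produced by Table~\ref{tab:semantics} these occur solely inside (i) propositional symbols of the form $C.g$ and $C.g_a$, (ii) the observation-spot nominals $l$, and (iii) the modality indices $K \in \partset{K_P} \setminus \{\emptyset\}$. So I define an induced map $\widehat\sigma$ by $\widehat\sigma(C.g) \eqdef C.\sigma(g)$, $\widehat\sigma(C.g_a) \eqdef C.\sigma(g)_{\sigma(a)}$, $\widehat\sigma(l) \eqdef \sigma(l)$ and $\widehat\sigma(K) \eqdef \sigma(K) = \{\sigma(k) \mid k \in K\}$, extended homomorphically to formulas with $\widehat\sigma(\emptyset) = \emptyset$ so that the plain modalities $\E{~}, \A{~}$ are preserved. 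Since $\sigma$ replaces atoms by atoms, never by compound formulas, this lifting causes no variable capture.

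The second step is a syntactic substitution lemma: $\sem{P[\sigma]} = \sem{P}\,\widehat\sigma$, equal up to reordering and duplication of conjuncts, which is harmless. I would prove it by structural induction over the four semantic functions of Table~\ref{tab:semantics}, carrying along the compartment parameter $C$ and the context-set parameter $K$; the generalized statement is $\sem{b[\sigma]}_P(C)(\sigma(K)) = \bigl(\sem{b}_P(C)(K)\bigr)\widehat\sigma$, and similarly for $\sem{\cdot}_B$ and $\sem{\cdot}_R$. Most clauses are immediate; the ones to check are the dependence clauses, where $\E{K}^-$ becomes $\E{\sigma(K)}^- = \E{\widehat\sigma(K)}^-$, and the context clause $\sem{[K]\{b\}}_P(C)(K') = \sem{b}_P(C)(K \cup K')$, where one uses $\sigma(K \cup K') = \sigma(K) \cup \sigma(K')$ so that substitution commutes with the accumulation of contexts.

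Next comes a semantic pullback lemma. Given any Kripke model $\mathcal M = \tuple{W, (R_k)_{k}, V}$ over the signature of $Q[\sigma]$, I would form $\mathcal M^{\sigma} = \tuple{W, (R'_K)_{K}, V'}$ over the signature of $Q$ by $R'_K \eqdef R_{\sigma(K)}$, $V'(p) \eqdef V(\widehat\sigma(p))$ and $V'(a) \eqdef V(\widehat\sigma(a))$. This is again a model in the sense of Definition~\ref{def:kripke}: $|V'(a)| = |V(\widehat\sigma(a))| \le 1$ because $\widehat\sigma(a)$ is still a single nominal, and the distinguished relation $R' = \bigcup_{\emptyset \neq K \subseteq K_Q} R_{\sigma(K)}$ coincides with $R = \bigcup_{\emptyset \neq S \subseteq K_{Q[\sigma]}} R_S$, since $S \mapsto \sigma^{-1}(S) \cap K_Q$ witnesses that $\sigma$ sends the non-empty parts of $K_Q$ onto the non-empty parts of $K_{Q[\sigma]} = \sigma(K_Q)$; hence $\E{~}$ is interpreted by the same relation on both sides. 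A routine induction on the hybrid formula $\phi$, using Table~\ref{tab:hybrid} and the definition of $\mathcal M^{\sigma}$, then yields $\mathcal M^{\sigma}, w \Vdash \phi \iff \mathcal M, w \Vdash \phi\,\widehat\sigma$ for every world $w$. Putting the pieces together: assuming $P \Sqsubset Q$, take any $\mathcal M \Vdash \sem{Q[\sigma]}$; the substitution lemma gives $\mathcal M \Vdash \sem{Q}\,\widehat\sigma$, the pullback lemma gives $\mathcal M^{\sigma} \Vdash \sem{Q}$, Definition~\ref{def:behinc} gives $\mathcal M^{\sigma} \Vdash \sem{P}$, the pullback lemma applied in the other direction gives $\mathcal M \Vdash \sem{P}\,\widehat\sigma$, and the substitution lemma rewrites this as $\mathcal M \Vdash \sem{P[\sigma]}$; so $P[\sigma] \Sqsubset Q[\sigma]$.

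The hard part is not any individual computation but the bookkeeping in the substitution lemma: one has to thread both parameters $C$ and $K$ through the induction and verify that $\sigma$ commutes with the way nested contexts are accumulated, and one must notice that $K_{P[\sigma]}$ can be strictly smaller than $K_P$ when $\sigma$ identifies two contexts --- which is exactly the point that forces the surjectivity remark on non-empty subsets in the pullback lemma, so that the global accessibility relation $R$, and therefore the modalities $\E{~}$ and $\A{~}$, remain correctly interpreted.
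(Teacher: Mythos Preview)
Your approach is essentially the paper's: your semantic pullback lemma is exactly the paper's auxiliary Proposition~\ref{prop:correct1} (the construction $\mathcal M \mapsto \mathcal{\tilde M}$ with $\mathcal M,w \Vdash \psi\sigma \iff \mathcal{\tilde M},w \Vdash \psi$), and the final chain of implications is the same. The differences are presentational: you isolate the syntactic substitution lemma $\sem{P[\sigma]} = \sem{P}\,\widehat\sigma$ explicitly, whereas the paper uses it tacitly, and your argument works uniformly over all models of $\sem{Q[\sigma]}$, avoiding the paper's preliminary case split on observability and its detour through restricted submodels.
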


\noindent
Table~\ref{tab:rules} describes the functional synthesis rules\footnote{Rules are of the form: \AxiomC{hypothesis}
\UnaryInfC{conclusion}
\DisplayProof. }. 
 $\Gamma$ is a set of components representing the library. 
$P \subseteq_{\text{Asm}} Q$ denotes the fact that program $Q$ corresponds to an assembly including $P$ \ie $Q= (Q_1,P,Q_2)$ where $Q_1$ or $Q_2$ may be an empty program.
\begin{table}[ht]
\begin{center}
\footnotesize
\renewcommand{\arraystretch}{2}
\setlength{\doublerulesep}{\arrayrulewidth}
\begin{tabular}{l r} 
\multicolumn{2}{c}{ \textsc{- Instantiation -}} \\
\multicolumn{2}{c}{
\AxiomC{$Q[\sigma] \subseteq_{\text{Asm}} P[\sigma]$}
\AxiomC{$\obs( Q[\sigma])$}
\AxiomC{$Q \in \Gamma$}
\RightLabel{(Inst.)}
\TrinaryInfC{$ Q \leftfree_\sigma P$}
\DisplayProof}
\\ 
\multicolumn{2}{c}{ \textsc{- Commutativity, Contraction -}} \\
\AxiomC{$Q \leftfree_\sigma P,P'$}
\RightLabel{(Com.)}
\UnaryInfC{$ Q \leftfree_\sigma P',P$}
\DisplayProof 
&
 \AxiomC{$Q \leftfree_\sigma P$}
\RightLabel{(Cont.)}
\UnaryInfC{$ Q \leftfree_\sigma P,P$}
\DisplayProof 
\\ 

\multicolumn{2}{c}{ \textsc{- Assembly -}} \\
\multicolumn{2}{c}{ 
\AxiomC{$Q \leftfree_\sigma P$}
\AxiomC{$Q'\leftfree_{\sigma'} P'$}
\AxiomC{$\restrict{\sigma}{\text{VA}(P) \cap \text{VA}(P')} = \restrict{\sigma'}{\text{VA}(P) \cap \text{VA}(P')} \qquad \obs (Q[\sigma],Q'[\sigma'])$} 
\RightLabel{(Asm.)}
\TrinaryInfC{$Q,Q' \leftfree_{\sigma \cup \sigma'} P,P' $}
\DisplayProof 
} 
\end{tabular}
\end{center}
\caption{ Functional synthesis rules}
\label{tab:rules}
\end{table}
Rule~(Inst.) describes the fact that an observable instance of a part of a component in the library is functionally synthesized. Rule~(Com.) expresses the commutativity of the assembly. Rule (Cont.) contracts the redundant formulation of programs. Finally, Rule~(Asm.) details the conditions for an assembly of two components, each representing a functional synthesis of a part of the designed function. 
A detailed example of their use on a real case is given in Section~\ref{sec:example}.
\begin{theorem} The functional synthesis rules (Table~\ref{tab:rules}) are correct.
\label{the:fsr}
\end{theorem}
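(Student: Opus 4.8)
The plan is to prove the theorem by structural induction on the derivation of $Q \leftfree_\sigma P$, carrying along the invariant ``$P[\sigma] \Sqsubset Q[\sigma]$ and $Q[\sigma]$ is observable'', which is exactly the correctness obligation attached to each rule. Everything rests on a single reading of Table~\ref{tab:semantics}: since $\sem{b_1,b_2}_P(C)(K) = \sem{b_1}_P(C)(K) \wedge \sem{b_2}_P(C)(K)$ and $\sem{\{b\}} = \textbf{A}\!\left(\sem{b}_P(\Lambda)(\emptyset)\right)$, the interpretation of any assembly $(A_1,\dots,A_n)$ is $\textbf{A}\!\left(\bigwedge_i \beta_i\right)$, where $\beta_i$ is the body interpretation of $A_i$ taken at the top level; as $\textbf{A}$ distributes over $\wedge$, this gives three facts I would isolate as a lemma: \emph{(a)} permuting or repeating the $A_i$ leaves $\sem{(A_1,\dots,A_n)}$ unchanged up to logical equivalence, hence leaves unchanged which programs are behaviourally included in it (using that $\Sqsubset$ is a pre-order, Definition~\ref{def:behinc}); \emph{(b)} if $A$ occurs as a sub-assembly of $B$ then every model of $\sem{B}$ is a model of $\sem{A}$, i.e. $A \Sqsubset B$; \emph{(c)} behavioural inclusion is monotone under assembly: $A_i \Sqsubset B_i$ for all $i$ implies $(A_1,\dots,A_n) \Sqsubset (B_1,\dots,B_n)$.

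With this lemma I would then discharge the four rules in turn. For the base case (Inst.), the premise $\obs(Q[\sigma])$ supplies the observability half of the invariant, and the premise exhibiting $P[\sigma]$ as a sub-assembly of the library instance $Q[\sigma]$ gives $P[\sigma] \Sqsubset Q[\sigma]$ by \emph{(b)}. For (Com.), the induction hypothesis on $Q \leftfree_\sigma P,P'$ yields $(P,P')[\sigma] \Sqsubset Q[\sigma]$ and $\obs(Q[\sigma])$; since substitution distributes over assembly, $(P',P)[\sigma]$ equals $(P'[\sigma],P[\sigma])$, whose interpretation is logically equivalent to that of $(P,P')[\sigma]$ by \emph{(a)}, so the inclusion transfers to the conclusion while observability is untouched. (Cont.) is the same argument, now using $\sem{(P,P)[\sigma]} \equiv \sem{P[\sigma]}$ from \emph{(a)}.

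The one case with genuine content is (Asm.). Its two induction hypotheses give $P[\sigma] \Sqsubset Q[\sigma]$ with $\obs(Q[\sigma])$, and $P'[\sigma'] \Sqsubset Q'[\sigma']$ with $\obs(Q'[\sigma'])$. The first premise --- agreement of $\sigma$ and $\sigma'$ on $\text{VA}(P) \cap \text{VA}(P')$ --- is precisely what makes $\sigma \cup \sigma'$ a well-defined substitution for which $(P,P')[\sigma\cup\sigma'] = (P[\sigma],P'[\sigma'])$ and $(Q,Q')[\sigma\cup\sigma'] = (Q[\sigma],Q'[\sigma'])$, provided, for the library parts, that components drawn from $\Gamma$ are chosen variable-disjoint from the specification and from one another. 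Observability of the conclusion is then literally the second premise $\obs(Q[\sigma],Q'[\sigma'])$, and \emph{(c)} applied to the two hypotheses yields $(P[\sigma],P'[\sigma']) \Sqsubset (Q[\sigma],Q'[\sigma'])$, the inclusion half; this closes the induction.

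I expect the main obstacle to be exactly the substitution bookkeeping in (Asm.): one has to pin down, and then use, the convention that a rule's substitutions touch only the variables of the programs named in that rule and that library components are renamed apart, so that $(P,P')[\sigma\cup\sigma']$ may be rewritten as $(P[\sigma],P'[\sigma'])$ and the induction hypotheses become applicable --- without this the two programs need not coincide. A secondary point I would emphasise is why the observability side-conditions concern the \emph{synthesized} programs and are indispensable: two programs that are separately observable can be jointly unobservable once assembled, so $\obs(Q[\sigma])$ and $\obs(Q'[\sigma'])$ do not entail $\obs(Q[\sigma],Q'[\sigma'])$, which is why it must appear as a premise; conversely Proposition~\ref{prop:comp1} ensures that observability of the larger synthesized program descends to the specification $P[\sigma]$, and Proposition~\ref{prop:subst-behinc} is the tool one would invoke to lift a bare inclusion $P \Sqsubset Q$ to $P[\sigma] \Sqsubset Q[\sigma]$ if the proof were organised through the uninstantiated programs instead.
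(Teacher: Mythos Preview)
Your proposal is correct and follows the same rule-by-rule strategy as the paper, grounding each case in the conjunctive shape $\sem{(A_1,\ldots,A_n)} = \textbf{A}\bigl(\bigwedge_i \beta_i\bigr)$ of the assembly semantics and handling (Asm.) by factoring $(Q,Q')[\sigma\cup\sigma']$ as $(Q[\sigma],Q'[\sigma'])$ so that the two inductive inclusions pass through the conjunction. Your explicit lemma (a)--(c) and your flag that this factorisation requires the library components to be variable-disjoint are refinements the paper leaves implicit (it simply asserts $Q[\sigma\cup\sigma']=Q[\sigma]$ from the premise on $\text{VA}(P)\cap\text{VA}(P')$), but the underlying argument is the same.
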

\begin{table}[ht]
\begin{center}
\footnotesize
\textsc{- dependences - }
\renewcommand{\arraystretch}{2}
\setlength{\doublerulesep}{\arrayrulewidth}
\begin{tabular}{c c c} 
\AxiomC{$Q \leftfree_\sigma S_1 \pcause S_2,S_2 \pcause S_3, \Delta $}
\RightLabel{(Trans.)}
\UnaryInfC{$Q \leftfree_\sigma S_1 \pcause S_3, \Delta$}
\DisplayProof 
&
\AxiomC{$ Q \leftfree_\sigma S_1 \pcause S_2, \Delta$} 
\RightLabel{(N2P.)}
\UnaryInfC{$Q \leftfree_\sigma S_1 \cause S_2, \Delta$}
\DisplayProof 
&
\AxiomC{$ Q \leftfree_\sigma S_1 \cause S_2, \Delta $} 
\RightLabel{(R2N.)}
\UnaryInfC{$Q \leftfree_\sigma S_1 \remcause S_2, \Delta$}
\DisplayProof 
\end{tabular}

\bigskip
\textsc{- agent states - }

\begin{tabular}{l c r} 
\AxiomC{$ S_1+S_2 $} 
\RightLabel{(SCom.)}
\UnaryInfC{$S_2+S_1 $}
\DisplayProof 
& 
\AxiomC{$S + s$}
\RightLabel{(SCont.)}
\UnaryInfC{$S+s+s$}
\DisplayProof
&
\AxiomC{$S + s$}
\RightLabel{(Incl.)}
\UnaryInfC{$S$}
\DisplayProof
\end{tabular}
\end{center}
\caption{Rules for the dependences and the agent states. $S_i$ stands for a collection, $s_1 + \ldots + s_n$, of agent states, including negation, and $\Delta$ stands for the rest of the program. }
\label{tab:causes}
\end{table}
Another set of rules, more specifically devoted to dependences (Table~\ref{tab:causes}), defines the alternate possibilities to express similar behaviours. The table also includes the rules for agent sets. 
Rule (Trans.) expands the chain of the persistent dependences by adding intermediary dependence to refine a pathway. Rule (N2P.) transforms a normal dependence to a persistent one since the latter is a normal dependence with an additional property. And Rule (R2N.) transforms a remanent dependence to a normal dependence, since normal dependence is also remanent dependence with a repetition of the effect restricted to one step.
According to these rules, all the dependence chains can be implemented with persistent dependences.

A possible algorithm for the assembly could be based on a combinatorial application of the rules. However, such algorithm may reveal inefficient in practice. The conditions for an efficient algorithm of compilation should be based on an internal representation of a program, as a set of contextualized dependences with attributes, $\{ \{A,[K] S_1 \allcause S_2\} \}$, such that $A,K,S_1,S_2$ are respectively: a set of attributes specification related to the agent involved in the dependency, a set of contexts and sets of agent states. Any program can be encoded under this representation from a normal form of the program (not detailed here). Accordingly, the problem solved by the compilation algorithm can be defined as follows (Definition~\ref{def:probasm}):
\begin{definition}[Functional Synthesis Problem]
\label{def:probasm}
Let $\Gamma=\{Q_i\}_{1 \leq i \leq n}$ be set where each $Q_i$ is a set of contextualized dependences with attributes and $P$ a set of contextualized dependences with attribute, can we find the smallest observable subset of components $C\subseteq \Gamma,$ such that there exists a substitution $\sigma$ so that its application on the components of $C$ form a cover of $P[\sigma]$,\ie
$\exists \sigma: P[\sigma] \subseteq \bigcup_{Q_j \in C} Q_j[\sigma] \land \obs C.$
\end{definition}
As the set cover problem is reducible to this problem, the problem is NP-complete. Then, the resolution is oriented towards a heuristic algorithm. 

\section{Example}
\label{sec:example}
\begin{figure}[ht]
\newcommand{\labelstyle}[1]{\textit{\scriptsize #1}}
\centering
	\begin{tikzpicture}[scale=0.7]
	\small
\GraphInit[vstyle=Normal]
\SetVertexNormal[Shape=circle,LineWidth=0.5pt,LineColor=black]
\SetUpEdge[lw =1.5pt, labelcolor = white, labeltext = black, labelstyle = {sloped, text= black,above}]
\Vertex[x=-2.5 ,y=6,L={\labelstyle{\scriptsize Tetr}},style={minimum size=30pt}]{P0}
\Vertex[x=-0 ,y=6,L={\labelstyle{LuxR}},style={minimum size=30pt}]{P1}
\Vertex[x=2.5 ,y=5,NoLabel=true, style={minimum size=3pt, inner sep=0pt, outer sep=4pt, shape=circle,blue}]{A3}
\Vertex[x=2.2 ,y=6.,NoLabel=true, style={minimum size=3pt, inner sep=0pt, shape=circle,blue}]{P3}
\Vertex[x=2 ,y=5.5,NoLabel=true, style={minimum size=3pt, inner sep=0pt, outer sep=4pt, shape=circle,blue}]{A4}
\Vertex[x=2.7,y=6.2,NoLabel=true, style={minimum size=3pt, inner sep=0pt, shape=circle,blue}]{P5}
\Vertex[x=3.2 ,y=5.3,NoLabel=true,style={minimum size=3pt, inner sep=0pt, shape=circle,blue}]{A0}
\Vertex[x=2.8 ,y=5.7,NoLabel=true,style={minimum size=3pt, inner sep=0pt, outer sep=5pt,shape=circle,blue}]{A2}

\Vertex[x=2.5 ,y=5,LabelOut=true,Lpos=270,Ldist=1,L={$AHL$},style={minimum size=3pt, inner sep=0pt, shape=circle,blue}]{P8}
\Vertex[x=5 ,y=6,L={\labelstyle{LuxR}},style={minimum size=30pt}]{Q0}
\Vertex[x=7.5 ,y=6,L=\labelstyle{$Lacl_{M1}$},style={minimum size=30pt}]{Q1}
\Vertex[x=6 ,y=4,L={\labelstyle{Cl}},style={minimum size=30pt}]{Q2}
\Vertex[x=8.5 ,y=4,L={\labelstyle{Lacl}},style={minimum size=30pt}]{Q3}
\Vertex[x=10 ,y=6,L={\labelstyle{GFP}},style={minimum size=30pt}]{Q4}
\tikzstyle{EdgeStyle}=[color=black!70,post]
\Edge[label={$+$}](P0)(P1)
\Edge[label={$+$}](P1)(A4)
\Edge[label={$+$}](A2)(Q0)
\Edge[label={$+$}](Q0)(Q1)
\Edge[label={$+$}](Q0)(Q2)
\Edge[label={$-$}](Q2)(Q3)
\Edge[label={$-$}](Q1)(Q4)
\Edge[label={$-$}](Q3)(Q4)
\end{tikzpicture}
\caption{The band detector regulatory circuit.}
	\label{fig:bandDetect}
\end{figure}

The compilation process is here exemplified in a real case by the design of the Band Detector proposed in \cite{Basu2005}.
This example explains how from a simple abstract definition of the functionality a complex design can be synthesized.
Accordingly, \textsc{gubs} may be used to describe a behaviour with a high-level of programming well as a low-level, detailing the components
involved in the process. Although, the functional synthesis is not yet performed automatically, it is worth to point out that the different transforms of the high-level program to obtain the final design complies to rules of Tables~\ref{tab:rules},~\ref{tab:causes}, insuring its correctness and so, its functional safety in the context of open system. 
 
 The design aims at forming patterns of different colours in a population of bacteria exploiting the quorum sensing phenomenon by staining with fluorescent protein (GFP).
The amount of molecules of interest that receives a cell depends on its relative position to the cell diffusing the molecule of interest controlled by an external event: more the cell is far from the source, the fewer is the amount of molecules received. 
The activation or inhibition of the fluorescent protein due to the concentration will distinguish the bands surrounding the source. In the original design, the protein does not fluoresce in an intermediary band. 

From a computing standpoint, we can assimilate the design to a message transmission coupled to a sensor/actuator responsible for fluorescence, then leading to a concise \textsc{gubs} program presented below: the diffusive molecule is \textit{AHL} which production is controlled by a context and the observation is applied on \textit{GFP}. Two categories of cells are defined: the \emph{Sender} and the \emph{Receiver}. Therefore, two \textsc{gubs} programs identify the two cell types.
 \begin{equation*}
\scriptsize
\renewcommand{\arraystretch}{1.5}
\setlength{\doublerulesep}{\arrayrulewidth}
\begin{array}{l @{=} l} 
\textit{Sender} & \{\; \; \textsf{AHL:}\{low \napprox mid \napprox high \},
 \context{Light}\{detect\cause\textsf{AHL}(low),detect\cause\textsf{AHL}(mid),detect\cause\textsf{AHL}(high)\} \}\\
\textit{Receiver}&\{\; \; \textsf{AHL}(low)\cause \overline{\textsf{GFP}}, \textsf{AHL}(mid)\cause \textsf{GFP}, \textsf{AHL}(high)\cause \overline{\textsf{GFP}}, obs_1\textbf{::}\textsf{GFP}, obs_2 \textbf{::}\overline{\textsf{GFP}} \}
\end{array} 
\end{equation*}

Figure ~\ref{fig:bandDetect} describes the original genetic circuit used in the article. The diffusible molecule is the constant \emph{AHL}.
The gene \textit{LuxR} has three activation thresholds: at Level $2$, it activates both \textit{LaclM1} and \textit{Cl}, at level $1$, the amount of \textit{AHL} only allows activation of \textit{Cl}, and finally, at level $0$, none are activated.
\begin{table}[ht]
\footnotesize
\centering
$$
\begin{array}{l @{=} l} 
Q_1& \{\context{\text{Light}}\{detect \cause\textsf{Tetr}\}\} \\
Q_2& \{\textsf{Tetr}\activate\textsf{Luxl}\}\\
Q_3& \{\textsf{AHL:}\{low \napprox mid \napprox high \} , \textsf{Luxl}\activate\textsf{AHL}(low),\textsf{Luxl}\activate\textsf{AHL}(mid),\textsf{Luxl}\activate\textsf{AHL}(high) \}\\
Q_4& \{\textsf{AHL:}\{low \napprox mid \napprox high \} , \textsf{LuxR:}\{low \napprox \{mid \prec high\}\}, \textsf{AHL}(mid)\cause\textsf{LuxR}(mid),\textsf{AHL}(high)\cause\textsf{LuxR}(high)\}\\
Q_5& \{\textsf{LuxR:}\{low \napprox \{mid \prec high\} \}, \textsf{LuxR}(mid)\activate\textsf{Cl},\textsf{LuxR}(high)\activate\textsf{Cl}+\textsf{LaclM1} \}\\
Q_6& \{\textsf{Cl}\inhibit\textsf{Lacl}\}\\
Q_7 & \{\textsf{LaclM1}\inhibit\textsf{GFP}\}\\
Q_8 & \{\textsf{Lacl}\inhibit\textsf{GFP}\}
\end{array}
$$
\caption{Part of the database dedicated to the Band Detector.}
\label{tab:bddWeiss}
\end{table}
We show that from the sender-receiver program, we obtain the original design by applying the afore mentioned rules
 with an appropriate selection of components. The regulations of Figure~\ref{fig:bandDetect} are described in \textsc{gubs} program
(Table~\ref{tab:bddWeiss}) translating in term of dependences and relations on their attributes their regulatory action. We focus here on some illustrative steps of the sender program compilation. The complete functional synthesis is given in Appendix. The compilation consists in finding the appropriate components whose assembly behaviourally includes the sender-receiver program, with the particularity that the diffusive molecule must be the same in both programs. To ease compilation follow-up, we label each dependency of the sender-receiver program (Table~\ref{tab:reecCode}).
\begin{table}[t]
\scriptsize
\centering
$$
\begin{array}{|l @{\; = \; } l | l @{\; = \; } l|} 
\hline
\multicolumn{2}{|c|}{\mathit{Sender}} & \multicolumn{2}{c|}{\mathit{Receiver}} \\
\hline 
P_{11} & \{\context{Light}\{detect\cause\textsf{AHL}(low)\}\} & P_{21} & \{\textsf{AHL}(low)\cause \overline{\textsf{GFP}}\} \\
P_{12} & \{\context{Light}\{detect\cause\textsf{AHL}(mid)\}\} & P_{22} & \{\textsf{AHL}(mid)\cause \textsf{GFP}\}\\
P_{13} & \{\context{Light}\{detect\cause\textsf{AHL}(high)\}\}& P_{23} & \{\textsf{AHL}(high)\cause \overline{\textsf{GFP}}\}\\
\hline
\end{array}
$$
with $\{\textsf{AHL:}\{low \napprox mid \napprox high \}\}$ as attributes of \textit{AHL}.
\caption{Separation of the dependences.}
\label{tab:reecCode}
\end{table}
Let us consider $P_{11}$ whose compilation is closed to $P_{12}$ and $P_{13}$. Notice that $P_{11}$ cannot be directly instantiated with any component because, in the one hand, the component $Q_1$ contains a context like $P_{11}$ but applied on gene $\mathit{Tetr}$ instead of \textit{AHL}, and on the other hand $Q_3$ has the \textit{AHL} molecule but no context is defined. So, to fit $P_{11}$ with the components $Q_1$, $Q_2$ and $Q_3$, first, the normal dependence is converted to persistent one (Rule~(N2P.)).
\begin{equation*}
\scriptsize
\AxiomC{$Q_1,Q_2,Q_3\leftfree_{\sigma} \{\context{light}\{detect\pcause AHL(low)\} \}$}
\RightLabel{(N2P.)}
\UnaryInfC{$Q_1,Q_2,Q_3\leftfree_{\sigma} P_{11}$}
\DisplayProof
\end{equation*}
Thereby, the resulting dependence can be separated to match the assembly $Q_1, Q_2, Q_3$ by applying (Trans.) rule twice. $v_1$ and $v_2$ are fresh variables.
\begin{equation*}
\scriptsize
\AxiomC{$Q_1,Q_2,Q_3\leftfree_{\sigma} P_{11}'= \{\context{light}\{detect\pcause v_2, v_2\pcause v_1, v_1\pcause AHL(low)\}$}
\RightLabel{(Trans.)}
\UnaryInfC{$Q_1,Q_2,Q_3\leftfree_{\sigma} \context{light}\{detect\pcause v_1, v_1\pcause AHL(low)\}$}
\RightLabel{(Trans.)}
\UnaryInfC{$Q_1,Q_2,Q_3\leftfree_{\sigma} \context{light}\{detect\pcause AHL(low)\}$}
\DisplayProof
\end{equation*}
Finally, we obtain a new program program $P_{11}'$ compatible with $Q_1,Q_2,Q_3$, and
each variable is substituted by a constant (biological element) with the application of Rule~(Inst.). For $P'_{11}$ we have:
\begin{equation*}
\scriptsize
\AxiomC{$Q_1,Q_2,Q_3[\sigma=\{light/Light,v_1/Tetr, v_2/Luxl\}] \subseteq_{Asm} P_{11}'[\sigma]$}
\AxiomC{$obs(Q_1,Q_2,Q_3[\sigma])$}
\RightLabel{(Inst.)}
\BinaryInfC{$Q_1,Q_2,Q_3\leftfree_{\sigma} \context{light}\{detect\pcause v_1, v_1\pcause v_2, v_2 \pcause AHL(low)\}$}
\DisplayProof
\end{equation*}
By following this scheme for $P_{12}$ and $P_{13}$, we respectively obtain $P_{12}'$ and $P_{13}'$.
The final assembly corresponds to the functional synthesis of \textit{Sender} program. 
\begin{equation*}
\scriptsize
\AxiomC{$Q_1,Q_2,Q_3\leftfree_{\sigma} P_{11}'$}
\noLine
\UnaryInfC{$\vdots$}
\noLine
\UnaryInfC{$Q_1,Q_2,Q_3\leftfree_{\sigma} P_{11}$}
\AxiomC{$Q_1,Q_2,Q_3\leftfree_{\sigma} P_{12}'$}
\noLine
\UnaryInfC{$\vdots$}
\noLine
\UnaryInfC{$Q_1,Q_2,Q_3\leftfree_{\sigma'} P_{12}$}
\AxiomC{$Q_1,Q_2,Q_3\leftfree_{\sigma} P_{13}'$}
\noLine
\UnaryInfC{$\vdots$}
\noLine
\UnaryInfC{$Q_1,Q_2,Q_3\leftfree_{\sigma''} P_{13}$}
\RightLabel{(Asm.)}
\TrinaryInfC{$Q_1,Q_2,Q_3\leftfree_{\sigma\cup{\sigma}'\cup{\sigma}''} P_{11},P_{12},P_{13}$}
\DisplayProof
\end{equation*}
In conclusion, the functional synthesis generates the original genetic circuit (Figure~\ref{fig:bandDetect}) from the sender program. A similar approach can be also applied to obtain the receiver program (see the complete proof in Appendix~\ref{sec:proofEx}).
\begin{equation*}
\scriptsize
\begin{array}{l c l} 
\textit{Sender}&= & \{\textsf{AHL:}\{low \napprox mid \napprox high \},\context{\text{Light}}\{detect\cause\textsf{Tetr}\},\\
&& \; \;\textsf{Tetr}\activate\textsf{Luxl}, \textsf{Luxl}\activate\textsf{AHL}(low),\textsf{Luxl}\activate\textsf{AHL}(mid),\textsf{Luxl}\activate\textsf{AHL}(high)\}\\
\end{array} 
\end{equation*}

\section{Related works}
\label{sec:related}
Several domain specific languages have been developped to model and simulate biological systems. 
Based on process-calculus, seminally used to model process concurrency, several rule-based languages model protein interactions~\cite{Priami2001,Danos07,Ciocchetta2009}. Another approach is based on logic, such as \textsc{biocham}~\cite{Calzone2006} that formalizes the temporal properties of a biological system. As these languages are dedicated to simulation, the objective is to close the systems because the simulations need to integrate all the characteristics of the analysed systems. By comparison, the purpose of \textsc{gubs} is different since the issue is to represent the behaviour of a synthetic device in an organism, leading to translate the notion of the openness of biological systems by the semantics of the language.

In synthetic biology, the structural description languages ~\cite{Czar2009,Pedersen2009a,Bilitchenko2011} allow to specify 
 well-formed genome sequences by grammars modularly and hierarchically. Although the sequence description is necessary, the programmer must previously anticipate the behaviour of the device to conceive. Besides, the behavioural design is not included in the program while it initially motivates it. In \textsc{gubs}, the design is driven by a behaviour description and sequence selection is postponed at compile phase. Moreover, the size of the structural description is also subject to a combinatorial explosion when the complexity of programmed systems increases.

Amorphous programming language has been also investigated to specify the biological devices at the scale of cell colony, here considered as a possible computing medium for amorphous program. J.~Beal~\cite{Beal2011} demonstrates the proof of concept of this approach in \textsc{proto}, showing the feasibility of an automatic compile chain. In \textsc{gubs}, the compile chain is based on rewriting rules whose correctness have been formally proved with regards to a semantics describing the constraints of an open system.

Developing a language for biological systems actually involves to consider several unknown due to their openness: lack of knowledge on all the interactions in biological circuits and imprecise definition of initial conditions. We only know the result of a chain of effects. Then, the major constraint for programming open system seems to be: how to provide an expressive language to describe the dynamics of such systems, but simple enough to capture the essence of the biological questions in a small program in order to allow programming of large biological systems with a program humanly achievable.

In the future, the design in synthetic biology will certainly require different programming layouts based on different paradigms addressing the integration levels of biological systems. In a tower of languages, starting from a language with collective operations on cell colony, using an amorphous programming language as Proto~\cite{Beal2011} or a language for dynamical systems with dynamical structures as MGS~\cite{Spicher2005}, and ending by a structural description programmed in a grammar based language, \textsc{gubs} language occupies the intermediary level dedicated to cell entity behavioural programming.

\section{Conclusion}
\label{sec:conclusion}
In \textsc{gubs} language, we propose to characterize a programming paradigm abstracting the molecular interactions in the context of open system, that differs to an approach dedicated to biological system modeling. Accordingly, the interactions are symbolized by causal dependences whose interpretation is driven by effect. We have demonstrated the proof-of-concept of the compilation based on rewriting rules, and illustrated it on a realistic example. The perspective of this work is to find an efficient compilation algorithm. 
Identifying the biological parameters guiding the component selection should be a key issue in this undertaking.

\paragraph{Acknowledgements.}
The funding for most of this work is granted by the \textsc{anr synbiotic} (\textsc{anr blan} 0307 01) and we would like to thank the colleagues of this project for their fruitful discussions. 
\bibliographystyle{eptcs}
\bibliography{gub}	

\vfill 

\section*{Appendix}
\begin{table}[ht]
$$
\small
\begin{array}{l @{\; ::= \;} l}
\text{program} & \{ \text{behaviour} \} \\
\text{behaviour} & \text{behaviour}, \text{behaviour} \;|\; \text{behaviour} \\ 
\text{behaviour} & \text{compartment} \; |\; \text{dependence} \;|\; \text{context} \;|\; \text{observation} \;|\; \text{defattributes}\\ 
\text{compartment} & \text{varconstant} \; \{\text{behaviour} \}\\
\text{observation} & \text{varconstant} \textbf{::} \text{worlds} \\
\text{context} & \context{\text{varconstants}} \; \{\text{behaviour} \}\\
\text{dependence} & \text {worlds} \cause \text{worlds} \; | \; \text{worlds} \pcause \text{worlds} \; | \; \text{worlds} \remcause \text{worlds} \\
\text{world } & \text{attribute} \;|\; \text{varconstant} (\text{attribute}) \;|\; \text{varconstant}.\text{world } \\
\text{worlds} & \text{worlds} + \text{world } \;|\; \text{world } \\
\text{attribute} & \text{varconstant} \;|\; \overline{\text{varconstant}}\\
\text{defattribute} & \text{varconstants} : \text{attspec} \\ 
\text{attspec} & \text{defspec} \{ \text{varconstants} \} \;|\; \{ \text{attrels} \} \\
\text{defspec} & \textsf{exclusion} \;|\; \textsf{inclusion} \\
\text{attrels} & \text{attrels}, \text{attrel} \;|\; \text{attrel} \\
\text{attrel} & \text{varconstant} \prec \text{varconstant} \;|\; \text{varconstant} \napprox \text{varconstant} \;|\; \text{varconstant}\\
\text{varconstant} & \textit{word} \; | \; \textit{Word}\\
\text{varconstants} & \text{varconstants}, \text{varconstant} \;|\; \text{varconstant} 
\end{array}
$$
\label{tab:syntax}
\caption{Syntax of \textsc{gubs} program}
\end{table}
\subsection*{Proofs}
\begin{proof}[Proposition~\ref{prop:comp1}]
By contradiction, assume that $P$ is unobservable, then there does not exist a model satisfying the formula. As $Q$ is observable, we deduce that there exists models satisfying $Q$, but no restricted model must satisfy $P$, that contradicts the definition of the behavioural consequence.
\end{proof}

\begin{proposition}
\label{prop:correct1}
Let $\psi \in \textsf{F}_{\mathcal H}$ be a formula, let $\sigma: (\textsf{NOM} \cup \textsf{PROP} \cup \textsf{REL} ) \to (\textsf{NOM} \cup \textsf{PROP} \cup \textsf{REL})$ be a substitution on nominals, variables and relational symbols, let $\mathcal M=\tuple{W, (R_k)_{k \in \tau}, V}$ be a model, we define the model $\mathcal{\tilde M}=\tuple{W, (\tilde R_{k})_{k \in \tilde \tau}, \tilde V}$ from $\mathcal M$ as follows:

\begin{enumerate}
\item $\forall a \in \textsf{NOM} \cup \textsf{PROP}, \forall w \in W: w \in V(a\sigma) \iff w \in \tilde V(a) $ \label{p1}
\item $\forall k \in \tilde \tau : w R_{k \sigma} w' \iff w \tilde R_{k} w';$ \label{p2}
\end{enumerate}
 we have: $ \mathcal M,w \Vdash \psi \sigma \iff \mathcal{\tilde M},w \Vdash \psi.$

\begin{proof}
The proof is defined by induction on the formula:

without loss of generality, we assume that $\psi$ is in Negation Normal Form where negation occurs only immediately before variables only. Recall that every formula can be set in Negation Normal Form. 
\begin{itemize}
\item $\mathcal{ M},w \Vdash a \iff \mathcal{\tilde M},w \Vdash a, a \in \textsf{PROP} \cup \textsf{NOM} $. By (\ref{p1}), we have $w \in V(a\sigma) \iff w \in \tilde V(a)$ leading to the equivalence.

\item $\mathcal{ M},w \Vdash \neg a \iff \mathcal{\tilde M},w \Vdash \neg a$. By definition of the realizability relation, this is equivalent to: $\mathcal{\tilde M},w \nVdash a \iff \mathcal{\tilde M},w \nVdash a$. By (\ref{p1}), this equivalence holds.

\item $\mathcal{ M},w \Vdash (\psi_1 \land \psi_2)\sigma \iff \mathcal{\tilde M},w \Vdash (\psi_1 \land \psi_2)$. 
 By definition of the substitution, we have to prove:
 $\mathcal{ M},w \Vdash (\psi_1\sigma) \land (\psi_2\sigma) \iff \mathcal{\tilde M},w \Vdash (\psi_1 \land \psi_2).$
By definition of the realizability relation we can formulate the property equivalently as follows: 
 $$\mathcal{ M},w \Vdash (\psi_1\sigma) \land \mathcal{\tilde M},w \Vdash (\psi_2\sigma) \iff \mathcal{\tilde M},w \Vdash \psi_1 \land \mathcal{\tilde M},w \Vdash \psi_2.$$
By induction hypothesis, we have: $\mathcal{\tilde M},w \Vdash (\psi_1\sigma) \iff \mathcal{\tilde M},w \Vdash \psi_1$ and $\mathcal{\tilde M},w \Vdash (\psi_2\sigma) \iff \mathcal{\tilde M},w \Vdash \psi_2$, implying the previous condition.

\item $\mathcal{M},w \Vdash (\psi_1 \lor \psi_2)\sigma \iff \mathcal{\tilde M},w \Vdash (\psi_1 \lor \psi_2)$. The proof is similar to the proof of the previous item ($\land$).

\item $\mathcal{M},w \Vdash (@_a\psi )\sigma \iff \mathcal{\tilde M},w \Vdash @_a \psi.$
By definition of the substitution we have to prove that: 
$\mathcal{M},w \Vdash (@_{a\sigma} \psi\sigma ) \iff \mathcal{\tilde M},w \Vdash @_a \psi $
By definition of the realizability relation, this is equivalent to:
$$\exists w' \in W: w \in V(a\sigma) \land \mathcal M,w' \Vdash \psi \sigma \iff \exists w'' \in W: w'' \in \tilde V(a)\sigma \land \mathcal{\tilde M},w'' \Vdash \psi.$$
By setting $w'=w''$, from (\ref{p1}) we have: $w' \in V(a \sigma) \iff w' \in V(a)$.
By induction hypothesis, we have: $\mathcal M,w' \Vdash \psi \sigma \iff \mathcal{\tilde M},w' \Vdash \psi.$
The both last properties imply that:
$$\exists w' \in W: w \in V(a\sigma) \land \mathcal M,w' \Vdash \psi \sigma \iff \exists w' \in W: w' \in \tilde V(a)\sigma \land \mathcal{\tilde M},w'' \Vdash \psi,$$
which implies the initial property.

\item $\mathcal{M},w \Vdash (\E{k}\psi )\sigma \iff \mathcal{\tilde M},w \Vdash \E{k}\psi.$
By definition of the substitution we prove that:
$\mathcal{M},w \Vdash \E{k\sigma} \psi\sigma \iff \mathcal{\tilde M},w \Vdash \E{k}\psi.$

By definition of the realizability relation the condition is equivalent to:
$$\exists w' \in W: {\mathcal M}, w' \Vdash \psi\sigma \land w R_{k\sigma} w' \iff \exists w'' \in W: \mathcal{\tilde M}, w'' \Vdash \psi \land w \tilde R_k w''.$$
 By setting $w'=w''$, the following equivalence holds from (\ref{p2}): $w R_{k \sigma} w' \iff w \tilde R_{k} w'$. By induction hypothesis, we have: ${\mathcal M}, w' \Vdash \psi\sigma \iff \mathcal{\tilde M}, w' \Vdash \psi$.
The both last properties imply that:
$$\exists w' \in W: {\mathcal M}, w' \Vdash \psi\sigma \land w R_{k\sigma} w' \iff \mathcal{\tilde M}, w' \Vdash \psi \land w \tilde R_k w'$$ 
which implies the initial property.

\item $\mathcal{M},w \Vdash (\A{k}\psi )\sigma \iff \mathcal{\tilde M},w \Vdash \A{k}\psi.$ The proof is similar to the previous item.
\item $\mathcal M \Vdash (\textbf{E} \psi) \sigma \iff \mathcal {\tilde M} \Vdash \textbf {E} \psi.$
By definition of the substitution we prove that: 
$\mathcal M,w \Vdash \textbf {E} (\psi \sigma) \iff \mathcal {\tilde M},w \Vdash \textbf {E} \psi.$

By definition of the realizability relation, we have: 
$$\exists w \in W: \mathcal M,w \Vdash (\psi \sigma) \iff \mathcal {\tilde M},w \Vdash \psi,$$
which is directly verified by induction hypothesis.
\item $\mathcal M \Vdash (\textbf{A} \psi) \sigma \iff \mathcal {\tilde M} \Vdash \textbf{A} \psi.$ 
The proof is similar to the previous item.
\end{itemize}
\end{proof}
\end{proposition}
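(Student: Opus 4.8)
The plan is to prove the biconditional by structural induction on $\psi$, following the primitive grammar of $\mathcal H(\textbf A,@)$ given in the excerpt, namely $\top$, an atom $a\in\textsf{PROP}\cup\textsf{NOM}$, $\neg\phi$, $\phi_1\land\phi_2$, $@_a\phi$, $\E{k}\phi$, $\E{k}^-\phi$ and $\textbf A\phi$; the derived operators $\bot,\lor,\to,\A{k},\textbf E$ are mere abbreviations, so every formula is already a primitive term and those cases are subsumed. The crucial structural fact, to be invoked throughout, is that $\mathcal M$ and $\tilde{\mathcal M}$ share the \emph{same} carrier $W$ (Definition~\ref{def:kripke}): the two models differ only in how they interpret atoms and modality indices, and hypotheses (\ref{p1}) and (\ref{p2}) say precisely that $\tilde{\mathcal M}$ reinterprets these by pulling back along $\sigma$. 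I would handle negation directly as an inductive case rather than first reducing to Negation Normal Form; this costs nothing and sidesteps the fact that $@_a$ is not self-dual when a nominal names no world, so that pushing $\neg$ through $@_a$ need not preserve meaning.

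For the base and Boolean cases the argument is immediate. For an atom, Table~\ref{tab:hybrid} gives $\mathcal M,w\Vdash a\sigma \iff w\in V(a\sigma)$, which by (\ref{p1}) equals $w\in\tilde V(a) \iff \tilde{\mathcal M},w\Vdash a$; the constant $\top$ is trivial. Negation uses only that the biconditional is symmetric: $\mathcal M,w\Vdash(\neg\phi)\sigma \iff \mathcal M,w\nVdash\phi\sigma$, which by the induction hypothesis is $\tilde{\mathcal M},w\nVdash\phi \iff \tilde{\mathcal M},w\Vdash\neg\phi$. Conjunction follows from $(\phi_1\land\phi_2)\sigma=\phi_1\sigma\land\phi_2\sigma$, evaluation at the same world $w$, and the two induction hypotheses. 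The satisfaction operator is the first to use the valuation transport seriously: since $(@_a\phi)\sigma=@_{a\sigma}(\phi\sigma)$ and (\ref{p1}) is the set identity $V(a\sigma)=\tilde V(a)$, the world named by $a\sigma$ in $\mathcal M$ and the world named by $a$ in $\tilde{\mathcal M}$ are literally the same subset of $W$; combined with the induction hypothesis for $\phi$ at that world this yields the equivalence, and does so uniformly whether or not $V(a\sigma)$ is a singleton. The global modality $\textbf A$ quantifies over all of the shared $W$ with no reference to any relation, so $\mathcal M,w\Vdash(\textbf A\phi)\sigma$ unfolds to $\forall w'\in W:\mathcal M,w'\Vdash\phi\sigma$, equivalent by the induction hypothesis to $\forall w'\in W:\tilde{\mathcal M},w'\Vdash\phi$, i.e. $\tilde{\mathcal M},w\Vdash\textbf A\phi$.

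The two remaining cases, the indexed modalities $\E{k}$ and $\E{k}^-$, are where hypothesis (\ref{p2}) does its work and where I expect the only real care to be needed. For $\E{k}$ the substitution renames the index, $(\E{k}\phi)\sigma=\E{k\sigma}(\phi\sigma)$, so $\mathcal M,w\Vdash\E{k\sigma}(\phi\sigma)$ means $\exists w':\,wR_{k\sigma}w'\land\mathcal M,w'\Vdash\phi\sigma$; clause (\ref{p2}) rewrites $wR_{k\sigma}w'$ as $w\tilde R_k w'$ and the induction hypothesis rewrites the body, yielding $\tilde{\mathcal M},w\Vdash\E{k}\phi$. The converse modality $\E{k}^-$ is identical except that the witness is required on the incoming side, $w'R_{k\sigma}w$; since (\ref{p2}) is quantified over all pairs of worlds it transfers $w'R_{k\sigma}w\iff w'\tilde R_k w$ just as well. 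The genuine subtleties here are questions of well-definedness rather than of the equivalence itself: one must take $\tilde\tau$ to be the set of indices $k$ with $k\sigma\in\tau$ so that $R_{k\sigma}$ exists whenever $\E{k}$ occurs, and, if $\tilde{\mathcal M}$ is to be a bona fide Kripke model, one must assume $\sigma$ maps nominals to nominals so that the constraint $|\tilde V(a)|\le1$ is inherited from $|V(a\sigma)|\le1$. Neither assumption is used in deriving the biconditional — the $@_a$ step relies only on the raw set equality $V(a\sigma)=\tilde V(a)$ — so I would state the sort-preservation hypothesis at the outset for hygiene and then let the induction close over the eight primitive cases, which establishes the proposition.
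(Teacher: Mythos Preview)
Your proof is correct and, like the paper's, proceeds by structural induction on $\psi$; the inductive cases you give are sound and cover the primitive grammar. The route differs in one genuine respect: the paper first normalises $\psi$ to Negation Normal Form and then treats only negated atoms, $\lor$, $\A{k}$, $\textbf{E}$, $\textbf{A}$ as separate cases, whereas you keep $\neg$ as a full inductive case and dispense with the derived connectives as abbreviations. Your choice is not merely cosmetic. As you observe, in the semantics of Table~\ref{tab:hybrid} a nominal may name no world ($|V(a)|\le 1$ in Definition~\ref{def:kripke}), and then $\neg @_a\phi$ and $@_a\neg\phi$ can disagree, so the blanket claim that every formula has an equivalent NNF is delicate here; handling $\neg$ directly sidesteps this entirely. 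You also make explicit two points the paper leaves implicit: the converse modality $\E{k}^-$, which is in the primitive grammar but absent from the paper's case list (it is of course symmetric to $\E{k}$ via (\ref{p2})), and the hygiene conditions on $\tilde\tau$ and on $\sigma$ preserving the nominal sort needed for $\tilde{\mathcal M}$ to be a well-formed Kripke model. None of this changes the mathematical content of the induction, but your decomposition is the cleaner one for this particular semantics.
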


\begin{proof}[Proposition~\ref{prop:subst-behinc}]
First, let us remark that when $P \nSqsubset Q$, the property is trivially verified.
Besides, under the assumption $P \Sqsubset Q$, if $Q[\sigma]$ is not observable the property is also verified because an unobservable program includes all programs behaviourally (Definition~\ref{def:behinc}).

In the rest of the proof, we assume that $P$ is behaviourally included in $Q$ and $Q[\sigma]$ is observable (\ie $P \Sqsubset Q$ and $\obs Q[\sigma]$).
Hence, by definition of the observability there exists a model $\mathcal M$ such that $\mathcal M \Vdash \sem{Q[\sigma]}$. By proposition~\ref{prop:correct1}, we deduce that there exists a model $\mathcal{\tilde M}$ such that: $\mathcal{\tilde M} \Vdash \sem{Q}$. Moreover, as $P \Sqsubset Q$ by hypothesis, there exists $\tilde S \subseteq \dom \mathcal{\tilde M}$ such that: $\mathcal{\tilde M}_{\tilde S} \Vdash \sem{P}$. By construction of $\mathcal{\tilde M}$ we deduce that there exists a sub model of $\mathcal M$, denoted by $\mathcal M'$, complying to the properties, (\ref{p1}) and (\ref{p2}) of Proposition~\ref{prop:correct1} which corresponds to $\mathcal{\tilde M}_{\tilde S}$. Moreover, we have $\mathcal M' \Vdash P[\sigma]$ by Proposition~\ref{prop:correct1}. Then we conclude that: $ P[\sigma] \Sqsubset Q[\sigma]$.
\end{proof}

\begin{proof}[Theorem~\ref{the:fsr}]
First, let us remark that $P \Sqsubset Q$ is true whenever $\mathcal M \nVdash Q$ by definition of the behavioural inclusion (Definition~\ref{def:behinc}).
Hence, the proof doesn't consider the trivial verified case but rather the case where $\mathcal M \Vdash Q$.
\begin{description}

\item[Inst.] Directly from the definition of the behavioural inclusion (Definition~\ref{def:behinc}).

\item[Com.] By definition of the semantics $\sem{P,P'}=\textbf{A}( \phi \land \phi') = \textbf{A}( \phi' \land \phi) =
\sem{P',P}$ with $\sem{P}_P= \phi$ and $\sem{P'}_P= \phi'$. Thus, for all $\mathcal M$ we have: $\mathcal M \Vdash \sem{P,P'} \iff \mathcal M \Vdash \sem{P',P}$. 
Hence, if $Q \Sqsubset P,P'$ we conclude that: $Q \Sqsubset P',P$.

\item[Cont.] Similar to the proof of (Com.). 

\item[Asm.] First let us remark that $\restrict{\sigma}{\text{VA}(P) \cap \text{VA}(P')} = \restrict{\sigma'}{\text{VA}(P) \cap \text{VA}(P')}$ means that the substitution of the common variables are the same for $\sigma$ and $\sigma'$, leading to, 
$Q[\sigma \cup \sigma']=Q[\sigma]$ and $Q'[\sigma \cup \sigma']=Q'[\sigma']$. Let $\sigma'' = \sigma \cup \sigma'$. 
Then, we have the following property by definition of the semantics (Table~\ref{tab:semantics}) and $\sigma''$.
$$ \forall \mathcal M \in \textsf{KS}(\sem{(Q,Q')[\sigma'']}): \mathcal M \Vdash \sem{Q[\sigma]} \land \mathcal M \Vdash \sem{Q'[\sigma']}.$$
Notice that the set of models, $\textsf{KS}(\sem{(Q,Q')[\sigma'']})$, is not empty since, by hypothesis,\\ $\obs (Q[\sigma],Q'[\sigma'])$ holds.
As $Q \leftfree_\sigma P$ and $Q' \leftfree_{\sigma'} P'$, any model validating $Q$ (resp. $Q'$) also validates $P$, (resp. $P'$) by definition of the functional synthesis. Then, we deduce that: 
$$\forall \mathcal M \in \textsf{KS}(\sem{(Q,Q')[\sigma'']}): \mathcal M \Vdash \sem{P[\sigma]} \land \mathcal M \Vdash \sem{P'[\sigma']}.$$
Then, we conclude that:
$$\forall \mathcal M \in \textsf{KS}(\sem{(Q,Q')[\sigma'']}): \mathcal M \Vdash \sem{(P,P')[\sigma'']}.$$
\end{description}
\end{proof}
\subsection*{Complete compilation of the Band Detector}
\label{sec:proofEx}

\begin{table}[ht]
\begin{sideways}
\begin{tabular}{c}
\textsc{- Sender -} \\
{\tiny
\begin{array}[t]{c c c}
\multicolumn{3}{c}{
\AxiomC{$Q_1,Q_2,Q_3[\sigma=\{detect/Detect,light/Light,v_1/Tetr, v_2/Luxl\}] \subseteq_{Asm} P_{11}'[\sigma]$}
\AxiomC{$obs(Q_1,Q_2,Q_3[\sigma])$}
\RightLabel{(Inst.)}
\BinaryInfC{$Q_1,Q_2,Q_3\leftfree_{\sigma} P_{11}'\}$}
\DisplayProof}
\\
&&\\
\multicolumn{3}{c}{
\AxiomC{$Q_1,Q_2,Q_3[\sigma'=\{detect/Detect,light/Light,v_3/Tetr, v_4/Luxl\}] \subseteq_{Asm} P_{12}'[\sigma']$}
\AxiomC{$obs(Q_1,Q_2,Q_3[\sigma'])$}
\RightLabel{(Inst.)}
\BinaryInfC{$Q_1,Q_2,Q_3\leftfree_{\sigma'} P_{12}'$}
\DisplayProof}
\\
&&\\
\multicolumn{3}{c}{
\AxiomC{$Q_1,Q_2,Q_3[\sigma''=\{detect/Detect,light/Light,v_5/Tetr, v_6/Luxl\}] \subseteq_{Asm} P_{13}'[\sigma'']$}
\AxiomC{$obs(Q_1,Q_2,Q_3[\sigma''])$}
\RightLabel{(Inst.)}
\BinaryInfC{$Q_1,Q_2,Q_3\leftfree_{\sigma''}P_{13}'$}
\DisplayProof}
\\
&&\\
\AxiomC{$P_{11}'=\context{light}\{detect\pcause v_1, v_1\pcause v_2, v_2 \pcause AHL(low)\}$}
\RightLabel{(Trans.)}
\UnaryInfC{$\context{light}\{detect\pcause v_1, v_1\pcause AHL(low)\}$}
\RightLabel{(Trans.)}
\UnaryInfC{$\context{light}\{detect\pcause AHL(low)\}$}
\RightLabel{(N2P.)}
\UnaryInfC{$P_{11}$}
\DisplayProof
&
\AxiomC{$P_{12}'=\context{light}\{detect\pcause v_3, v_3\pcause v_4, v_4 \pcause AHL(mid)\}$}
\RightLabel{(Trans.)}
\UnaryInfC{$\context{light}\{detect\pcause v_3, v_3\pcause AHL(mid)\}$}
\RightLabel{(Trans.)}
\UnaryInfC{$\context{light}\{detect\pcause AHL(mid)\}$}
\RightLabel{(N2P.)}
\UnaryInfC{$P_{12}$}
\DisplayProof
&
\AxiomC{$P_{13}'=\context{light}\{detect\pcause v_5, v_5\pcause v_6, v_6 \pcause AHL(high)\}$}
\RightLabel{(Trans.)}
\UnaryInfC{$\context{light}\{detect\pcause v_5, v_5\pcause AHL(high)\}$}
\RightLabel{(Trans.)}
\UnaryInfC{$\context{light}\{detect\pcause AHL(high)\}$}
\RightLabel{(N2P.)}
\UnaryInfC{$P_{13}$}
\DisplayProof
\\ 
&& \\
\multicolumn{3}{c}{
\AxiomC{$Q_1,Q_2,Q_3\leftfree_{\sigma} P_{11}$}
\AxiomC{$Q_1,Q_2,Q_3\leftfree_{\sigma'} P_{12}$}
\AxiomC{$Q_1,Q_2,Q_3\leftfree_{\sigma''} P_{13}$}
\RightLabel{(Asm.)}
\TrinaryInfC{$Q_1,Q_2,Q_3\leftfree_{\sigma\cup{\sigma}'\cup{\sigma}''} P_{11},P_{12},P_{13}$}
\DisplayProof
}
\end{array} 
} 
\\ \\
\textsc{- Receiver -} \\
{ \tiny
\begin{array}[t]{c c c}
\multicolumn{3}{c}{
\AxiomC{$Q_4,Q_5,Q_6,Q_8[\sigma=\{v_1/LuxR,v_2/Cl,v_3/Lacl\}] \subseteq_{Asm} P_{21}'[\sigma]$}
\AxiomC{$obs(Q_4,Q_5,Q_6,Q_8[\sigma])$}
\RightLabel{(Inst.)}
\BinaryInfC{$Q_4,Q_5,Q_6,Q_8\leftfree_\sigma P_{21}'$}
\DisplayProof}\\
&&\\
\multicolumn{3}{c}{
\AxiomC{$Q_4,Q_5,Q_6,Q_8[\sigma'=\{v_4/LuxR,v_5/Cl,v_6/Lacl\}] \subseteq_{Asm} P_{22}'[\sigma']$}
\AxiomC{$obs(Q_4,Q_5,Q_6,Q_8[\sigma'])$}
\RightLabel{(Inst.)}
\BinaryInfC{$Q_4,Q_5,Q_6,Q_8\leftfree_\sigma' P_{22}'$}
\DisplayProof}\\
&&\\
\multicolumn{3}{c}{
\AxiomC{$Q_4,Q_5,Q_7[\sigma''=\{v_7/LuxR,v_8/LacM1\}] \subseteq_{Asm} P_{23}'[\sigma'']$}
\AxiomC{$obs(Q_4,Q_5,Q_7[\sigma''])$}
\RightLabel{(Inst.)}
\BinaryInfC{$Q_4,Q_5,Q_7\leftfree_\sigma'' P_{23}'$}
\DisplayProof}\\
&&\\
\AxiomC{$P_{21}'=AHL(low)\pcause v_1,v_1\pcause v_2,v_2\pcause v_3,v_3\pcause \overline{GFP}$}
\RightLabel{(Trans.)}
\UnaryInfC{$AHL(low)\pcause v_1,v_1\pcause v_2,v_2\pcause \overline{GFP}$}
\RightLabel{(Trans.)}
\UnaryInfC{$AHL(low)\pcause v_1,v_1\pcause \overline{GFP}$}
\RightLabel{(Trans.)}
\UnaryInfC{$AHL(low)\pcause \overline{GFP}$}
\RightLabel{(N2P.)}
\UnaryInfC{$P_{21}$}
\DisplayProof
&
\AxiomC{$P_{22}'=AHL(mid)\pcause v_4,v_4\pcause v_5,v_5\pcause v_6,v_6\pcause GFP$}
\RightLabel{(Trans.)}
\UnaryInfC{$AHL(mid)\pcause v_4,v_4\pcause v_5,v_5\pcause GFP$}
\RightLabel{(Trans.)}
\UnaryInfC{$AHL(mid)\pcause v_4,v_4\pcause GFP$}
\RightLabel{(Trans.)}
\UnaryInfC{$AHL(mid)\pcause GFP$}
\RightLabel{(N2P.)}
\UnaryInfC{$P_{22}$}
\DisplayProof
&
\AxiomC{$P_{23}'=AHL(high)\pcause v_7,v_7\pcause v_8,v_8\pcause \overline{GFP}$}
\RightLabel{(Trans.)}
\UnaryInfC{$AHL(high)\pcause v_7,v_7\pcause \overline{GFP}$}
\RightLabel{(Trans.)}
\UnaryInfC{$AHL(high)\pcause \overline{GFP}$}
\RightLabel{(N2P.)}
\UnaryInfC{$P_{23}$}
\DisplayProof
\\
&&\\
\multicolumn{3}{c}{
\AxiomC{$Q_4,Q_5,Q_6,Q_8\leftfree_\sigma P_{21}$}
\AxiomC{$Q_4,Q_5,Q_6,Q_8\leftfree_{\sigma'} P_{22}$}
\AxiomC{$Q_4,Q_5,Q_7\leftfree_{\sigma''} P_{23}$}
\RightLabel{(Asm.)}
\TrinaryInfC{$Q_4,Q_5,Q_6,Q_7,Q_8\leftfree_{\sigma\cup{\sigma}'\cup{\sigma}''} P_{21},P_{22},P_{23}$}
\DisplayProof
}
\end{array}
} 
\\ \\
\textsc{- Final design -}
\\
{\tiny
\footnotesize
$ \begin{array}{|l|l|}
\hline
\multicolumn{1}{|c|}{\mathit{Sender}}& \multicolumn{1}{c|}{\mathit{Receiver}}\\
\hline
\{\textsf{AHL:}\{low \napprox mid \napprox high \}, \qquad \context{\text{Light}}\{detect\cause\textsf{Tetr}\},
&\{\textsf{AHL:}\{low \napprox mid \napprox high \}, \qquad
\textsf{LuxR:}\{low \napprox \{mid \prec high\} \},\\
\textsf{Tetr}\activate\textsf{LuxL}, \hfill
\textsf{Luxl}\activate\textsf{AHL}(low),
& \textsf{AHL}(mid)\cause\textsf{LuxR}(mid),\hfill
 \textsf{AHL}(high)\cause\textsf{LuxR}(high),\\
\textsf{Luxl}\activate\textsf{AHL}(mid), \hfill
\textsf{Luxl}\activate\textsf{AHL}(high)\}
& \textsf{LuxR}(mid)\activate\textsf{Cl}, \hfill
\textsf{LuxR}(high)\activate\textsf{LaclM1},\\
& \textsf{Cl}\inhibit\textsf{Lacl},\qquad
\textsf{LaclM1}\inhibit\textsf{GFP}, \hfill
 \textsf{Lacl}\inhibit\textsf{GFP}\}\\
\hline
\end{array} $
}
\end{tabular}
\end{sideways} 
\vspace{-7ex}\caption{Complete band detector compilation.}
\label{tab:proof2}
\end{table}
\end{document}